\pdfoutput=1
\documentclass[11pt]{article}

\usepackage[a4paper,margin=2.6cm]{geometry}
\usepackage{amsmath,amssymb,mathtools,amsthm}
\newtheorem{proposition}{Proposition}
\newtheorem{lemma}{Lemma}
\newtheorem{corollary}{Corollary}
\newtheorem{conjecture}{Conjecture}
\theoremstyle{remark}

\usepackage{microtype}
\usepackage{booktabs}
\usepackage{graphicx}
\graphicspath{{figures/}}
\usepackage{xr-hyper}
\usepackage{hyperref}
\usepackage{xurl}
\usepackage{xcolor}
\usepackage{orcidlink}

\newcommand{\thetafun}{\theta}
\newcommand{\Var}{\operatorname{Var}}

\newif\ifarxiv
\arxivtrue
\newif\iflead
\leadtrue

\ifarxiv\else\fi

\title{Random-walk scaling without Bernoulli statistics:\\
transport, persistence and resonances in the triangle map}
\author{Mirko Degli Esposti\,\orcidlink{0000-0003-0316-3449}\\[2pt]
\small Department of Physics and Astronomy, University of Bologna, Bologna, Italy\\
\small\href{mailto:mirko.degliesposti@unibo.it}{\texttt{mirko.degliesposti@unibo.it}}}
\date{}

\begin{document}
\maketitle

\begin{abstract}
The triangle map is an area-preserving, piecewise-parabolic map of the torus
with vanishing Lyapunov exponent. We study the symbolic process $X_n=\pm1$
given by the two halves of the torus, whose partial sums $S_n$ are the
transport of the lifted momentum, with an exact-phase integrator
and exact computations on polygonal cylinders. We prove
that the Koopman operator has no non-constant $H^1$ eigenfunctions, that the
symbol is not an $H^1$ coboundary, and that runs of equal symbols are
uniformly bounded. For generic parameters the walk is normally diffusive and
Gaussian, but its diffusion coefficient is not that of a coin: away from the
edges of the parameter square it varies between $0.2$ and $14$ times the
Bernoulli value, and for a benchmark pair, with $D=3.09\pm0.01$, the excess comes
from the order of the bounded inter-event intervals, not their distribution. The first-passage exponent is $1/2$ with a non-Bernoulli
amplitude, and worst-case Birkhoff averages decay as for a uniquely ergodic
system. For $\beta=0$ the Green--Kubo sum cancels, $D=0$, yet $\Var(S_n)$
keeps growing slowly, sign changes are rare and the correlations of Fourier
modes decay as $n^{-0.2}$. We identify the zero-sum words, which become pure
twists at $\beta=0$, as resonances that trap orbits with a computable
$1/\ell$ law, set the crossover time $\beta^{-1/2}$ at small $\beta$, and, in
an asymmetric deformation of the symbol, produce a logarithmic term in the
variance whose coefficient we compute exactly. We formulate conjectures on
unique ergodicity, mixing, the central limit theorem and the cohomology of the
symbol.
\end{abstract}

\iflead
\medskip
\begin{quote}\small
\noindent\textbf{Lead paragraph.}
Diffusion in deterministic systems is usually traced to chaos:
nearby orbits separate exponentially, the motion quickly forgets its past,
and the dynamics behaves like a sequence of coin tosses. Many systems,
however, diffuse without any exponential instability; billiards in polygons,
where nearby trajectories separate only linearly in time, are the classical
example. We study the triangle map, a simple model of the irrational
triangular billiard, and ask what its diffusion actually reveals. Scaling
alone turns out to say little: the symbolic walk spreads as the square root
of time with Gaussian fluctuations, as a coin would, but its diffusion
coefficient depends on the parameters and is fixed by the order in which
waiting times occur, not by their statistics. For a special parameter the
diffusion nearly stops. We trace this to resonant symbol patterns that trap
orbits for long times, and show how a small perturbation turns these traps
into measurable transport.
\end{quote}
\fi

\section{Introduction}

Deterministic dynamical systems may display statistical properties usually
associated with stochastic processes even in the absence of exponential
instability and positive
entropy~\cite{CasatiProsen1999,CasatiTsallisBaldovin2005,ArtusoOliveiraManchein2023}.
Polygonal billiards and related piecewise isometric systems are the natural
setting for this phenomenon~\cite{MasurTabachnikov2002}. Billiards in a
dense $G_\delta$ set of polygons are ergodic~\cite{KerckhoffMasurSmillie1986}
and weak mixing is also generic~\cite{ChaikaForni2026}; recent preprints
announce ergodicity and weak mixing for every triangle with an angle
irrational relative to $\pi$~\cite{OpenAIErgodicity2026,OpenAIWeakMixing2026},
building on transverse regularity and the cohomological equation for flows
on flat surfaces~\cite{ForniMoll2025}. Numerical studies have long reported
mixing-like, diffusive and anomalous transport in such
systems~\cite{CasatiProsen1999,HorvatEtAl2009,YoshidaEtAl2022,ArtusoOliveiraManchein2023},
whose dynamical origin is only partly understood.

The triangle map of Casati and Prosen~\cite{CasatiProsen2000}, a model of
the irrational triangular billiard, is one of the simplest systems in this
class. It is the area-preserving map of the torus
\begin{equation}
 \begin{gathered}
 p_{n+1}=(p_n+\alpha\,\thetafun(q_n)+\beta)\pmod 1,\\
 q_{n+1}=q_n+p_{n+1}\pmod 1,
 \end{gathered}
 \label{eq:triangle-map}
\end{equation}
where $\thetafun(q)=+1$ for $q\in[0,1/2)$ and $-1$ for $q\in[1/2,1)$.
In a previous study~\cite{HorvatEtAl2009} we found evidence of ergodic and
mixing behavior in a generic, non-resonant case, where the symbolic sums
\begin{equation}
 S_n=\sum_{j=0}^{n-1}X_j,
 \qquad
 X_j=\thetafun(q_j),
 \label{eq:symbolic-walk}
\end{equation}
scale as $n^{1/2}$, and much slower growth, slow correlation decay and
strong arithmetic sensitivity for $\beta=0$. That work introduced the
polygonal and symbolic descriptions of the map, the cubic growth of the
number of cylinders and finite Markov approximations.

Here we ask to what extent the stochastic-like behavior of the triangle map
is that of an uncorrelated random process. The $n^{1/2}$ scaling cannot
settle this: normal diffusion is compatible with strong temporal
correlations, which enter the diffusion coefficient through the Green--Kubo
relation~\cite{KlagesKorabel2002}, and decay of correlations does not imply
a positive diffusion coefficient, coboundaries being the standard
counterexample~\cite{Leonov1961}. We compare the map with an unbiased
Bernoulli sequence through three groups of diagnostics: the two-point
quantities that control transport ($\Var(S_n)$, Green--Kubo sums, the
spectrum of $X$); persistence, via first passages and the Sparre Andersen
theorem~\cite{SparreAndersen1953}; and the separation of the distribution of
inter-event intervals from their order~\cite{AltmannCristadoroDegliEsposti2012}.
Since double-precision orbits turn out to be unreliable for $\beta=0$, all
results use an integrator based on the exact lift of the dynamics, and many
quantities are computed without sampling from the polygonal cylinder sets.

The results separate three situations. The Bernoulli sequence has $D=1$. The
generic triangle map is normally diffusive, with Gaussian fluctuations and
first-passage exponent $1/2$, but its diffusion coefficient is not that of a
coin: $D=3.09\pm0.01$ for the benchmark pair $(\alpha,\beta)=(1/e,(\sqrt5-1)/2)$,
three times the Bernoulli value,
and $D$ varies between $0.2$ and $14$ away from the edges of the parameter
square, with deep minima at rational $\beta$. For the benchmark, $D$ is set by
the order of the inter-event intervals, which multiplies the value implied by
their distribution by $3.7$ ($2.4$--$3.7$ for other pairs), and the
persistence amplitude is not Bernoulli.
For $\beta=0$ the Green--Kubo sum cancels and $D=0$, while $\Var(S_n)$ keeps
growing slowly, almost independently of $\alpha$, sign changes of $S_n$ are
rare and Fourier modes decorrelate as $n^{-0.2}$. A small $\beta$ restores
diffusion after a time of order $\beta^{-1/2}$.

Three analytical observations frame and partly explain the numerics: the
Koopman operator has no non-constant $H^1$ eigenfunctions
(Proposition~\ref{prop:H1}), so the symbol is not an $H^1$ coboundary; runs of
equal symbols are uniformly bounded (Lemma~\ref{lem:runs}); and at $\beta=0$
every zero-sum word is a pure twist on its cylinder, carrying segments of
periodic points that trap nearby orbits for times with a $1/\ell$ law and an
exactly computable coefficient. These resonances fix the crossover time
$\beta^{-1/2}$, and an asymmetric deformation of the symbol, which makes them
displace the walk, turns them into a logarithmic term of the variance with a
coefficient we compute exactly and test numerically. We do not prove ergodicity or mixing;
worst-case Birkhoff averages over adversarially optimized initial conditions
decay as $N^{-1/2}$, as for the uniquely ergodic skew shift, and we state
precisely what the data support in three conjectures
(Sec.~\ref{sec:discussion}).

Section~\ref{sec:map} sets up the symbolic walk and the exact lift.
Section~\ref{sec:rigorous} contains the rigorous results.
Sections~\ref{sec:generic} and~\ref{sec:beta0} treat the generic and
$\beta=0$ regimes, Sec.~\ref{sec:fourier} the correlations of Fourier modes,
Sec.~\ref{sec:asymmetric} the asymmetric deformation, and
Sec.~\ref{sec:discussion} the conjectures. Numerical methods are in the
Appendix; further data and checks are in the supplementary material.

\section{The map and the symbolic walk}
\label{sec:map}

The discontinuity set of the map is $\mathcal D=\{q=0\}\cup\{q=1/2\}$. One
iteration is the composition $T=S\circ R_\beta\circ K_\alpha$ of the
discontinuous kick $K_\alpha(q,p)=(q,p+\alpha\thetafun(q))$, the translation
$R_\beta(q,p)=(q,p+\beta)$ and the parabolic shear $S(q,p)=(q+p,p)$. Away from
$\mathcal D$ the tangent map is
\begin{equation}
 A=\begin{pmatrix}1&1\\0&1\end{pmatrix},\qquad
 A^n=\begin{pmatrix}1&n\\0&1\end{pmatrix},
 \label{eq:parabolic-matrix}
\end{equation}
so nearby orbits that follow the same symbolic branch separate at most
linearly and the Lyapunov exponent vanishes. When a cut separates two orbits,
their kicks differ by $2\alpha$, and the shear converts this momentum
difference into a growing separation in $q$ until further cuts split them
again (Fig.~\ref{fig:mechanism}). The stochastic-like behavior is generated
by this interplay of parabolic shear and symbolic splitting, not by
exponential instability; the recurrence of orbits near the cuts and the
resulting orbit complexity were quantified in Ref.~\cite{DegliEspostiGalatolo2005}.

\begin{figure}[t]
  \centering
  \includegraphics[width=\textwidth]{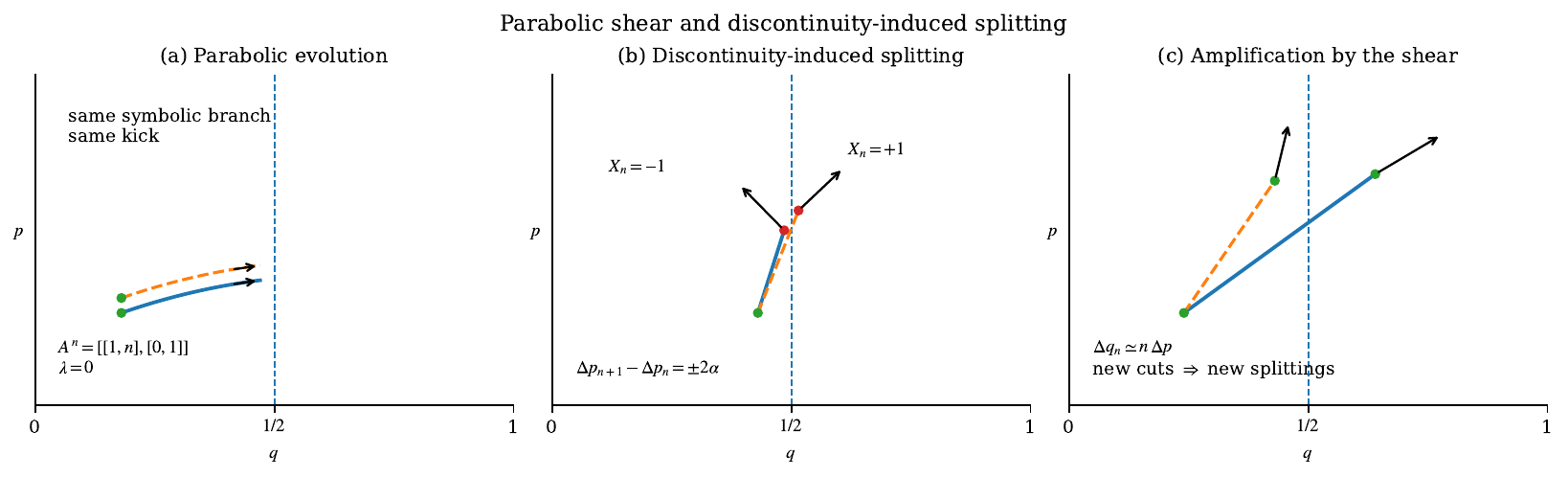}
  \caption{\textbf{Mechanism of the triangle map.}
  (a) Within a continuity domain nearby orbits separate parabolically.
  (b) A discontinuity gives them kicks differing by $2\alpha$.
  (c) The shear converts the momentum difference into separation in $q$,
  and further cuts generate new symbolic splittings.}
  \label{fig:mechanism}
\end{figure}

The symbolic sum $S_n$ is a transport observable. On the lift, iteration gives
\begin{equation}
 \begin{gathered}
 \widetilde p_n=p_0+n\beta+\alpha S_n,
 \qquad
 W_n=\sum_{k=1}^{n}S_k,\\
 \widetilde q_n=q_0+np_0+\beta\,\frac{n(n+1)}{2}+\alpha W_n,
 \end{gathered}
 \label{eq:exact-phase}
\end{equation}
with $(q_n,p_n)=(\widetilde q_n,\widetilde p_n)\bmod1$, so that
$\Var(\widetilde p_n-p_0-n\beta)=\alpha^2\Var(S_n)$. Equation~\eqref{eq:exact-phase}
also gives a non-accumulating integrator: $S_n$ and $W_n$ are integers, and
$q_n$ is recomputed at every step in double-double arithmetic, with an error
of order $10^{-16}$ that does not grow along the orbit (Appendix~\ref{app:methods}).
All results below use it.

Initial conditions are distributed according to Lebesgue measure, which is
invariant, so $X_n$ is stationary with $\langle X_n\rangle=0$. With
$C(k)=\langle X_0X_k\rangle$, $D_n=\Var(S_n)/n$ and
$G_K=C(0)+2\sum_{k=1}^{K}C(k)$, stationarity gives
\begin{equation}
 D_n=C(0)+2\sum_{k=1}^{n-1}\Bigl(1-\frac kn\Bigr)C(k)
 =\frac1n\sum_{K=0}^{n-1}G_K ,
 \label{eq:Dn-cesaro}
\end{equation}
so $D_n$ is the Ces\`aro mean of the Green--Kubo partial sums and differs
from $G_n$ at finite $n$. If $\sum_k|C(k)|<\infty$, both converge to the
diffusion coefficient
\begin{equation}
 D=\lim_{n\to\infty}D_n=G_\infty .
 \label{eq:gk-truncated}
\end{equation}
Note the normalization $\Var(S_n)\sim Dn$, without the factor $2$ common in
physics; the momentum diffusion coefficient in the convention
$\Var(\Delta\widetilde p_n)\sim2D_pn$ is $D_p=\alpha^2D/2$. The two-sided spectral density
$S_X(\omega)=\sum_kC(k)e^{-ik\omega}$ satisfies
\begin{equation}
 \Var(S_n)=\frac{1}{2\pi}\int_{-\pi}^{\pi}S_X(\omega)\,
 \frac{\sin^2(n\omega/2)}{\sin^2(\omega/2)}\,d\omega ,
 \label{eq:spectral-diffusion}
\end{equation}
so $S_X(0)=D$ when $S_X$ is continuous at zero. For the Bernoulli benchmark
$C(k)=0$ for $k\ge1$, $D=1$ and $S_X\equiv1$.

\section{Rigorous results}
\label{sec:rigorous}

This section collects the few statements we can prove. They concern the
spectrum of the Koopman operator, the cohomology of the symbol and the
combinatorics of the symbolic sequence; none of them establishes ergodicity,
weak mixing or mixing, but they frame the numerical results and, through the
word map below, partly explain them.

\subsection{No $H^1$ eigenfunctions and no $H^1$ transfer function}

The recent results on triangular billiards recalled in the
Introduction~\cite{OpenAIErgodicity2026,OpenAIWeakMixing2026} build on the
framework of Chaika and Forni and of Forni and
Moll~\cite{ChaikaForni2026,ForniMoll2025}, in which transverse regularity of
eigenfunctions plays a central role. This suggests the following elementary
discrete observation for our torus map. It is an analogy of method, not a
consequence of their statements about billiard flows, and it complements the
Fourier analysis of the Koopman operator in Ref.~\cite{HorvatEtAl2009}, where
eigenfunctions depending on $p$ alone were exhibited for rational $\alpha$.

\begin{proposition}\label{prop:H1}
If $\alpha$ is irrational and $\beta$ is arbitrary, every
$H^1(\mathbb T^2)$ eigenfunction of the Koopman operator
$Uf=f\circ T_{\alpha,\beta}$, for any eigenvalue $|\lambda|=1$ including
$\lambda=1$, is constant.
\end{proposition}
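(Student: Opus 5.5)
The plan is to differentiate the eigenvalue equation and exploit the parabolic structure $A^n$ of \eqref{eq:parabolic-matrix}: in $L^2$, a unitary Koopman operator cannot accommodate the linear growth that the shear forces on $\nabla f$. Let $f\in H^1(\mathbb T^2)$ satisfy $f\circ T=\lambda f$ with $|\lambda|=1$. On each open strip $\{0<q<1/2\}$ and $\{1/2<q<1\}$ the map is the smooth affine diffeomorphism $T(q,p)=(q+p+\beta\pm\alpha,\;p+\beta\pm\alpha)$ (mod $1$), with constant tangent map $A$. Since $f\circ T=\lambda f$ is itself in $H^1$, the chain rule for Sobolev functions composed with affine maps applies on each strip. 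It gives, almost everywhere,
\begin{equation*}
 \lambda\,\partial_q f=(\partial_q f)\circ T,\qquad
 \lambda\,\partial_p f=(\partial_q f+\partial_p f)\circ T .
\end{equation*}
The cut set $\mathcal D$ has measure zero, so these are identities in $L^2(\mathbb T^2)$. The point to check carefully is that no singular part of the distributional derivative lives on $\mathcal D$. This holds because we differentiate the left-hand side $\lambda f$, which is already known to be $H^1$, and only identify its weak gradient strip by strip.

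Iterating the relations gives $(\partial_q f)\circ T^n=\lambda^n\partial_q f$ and, by induction,
\begin{equation*}
 \lambda^n\,\partial_p f=(n\,\partial_q f+\partial_p f)\circ T^n .
\end{equation*}
Since $T$ preserves Lebesgue measure, composition with $T^n$ is an $L^2$ isometry. Hence $\|\partial_p f\|_2=\|n\,\partial_q f+\partial_p f\|_2\ge n\|\partial_q f\|_2-\|\partial_p f\|_2$ for every $n$. This forces $\partial_q f=0$, so $f=f(p)$ with $f\in H^1(\mathbb T)$. This is the step where the vanishing Lyapunov exponent is used: the linear growth of $A^n$ is incompatible with unitarity.

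With $f=f(p)$, the eigenvalue equation reads $f(p+\beta+\alpha\thetafun(q))=\lambda f(p)$ for a.e.\ $(q,p)$. Taking $q$ in each half of the circle gives $f(p+\beta+\alpha)=\lambda f(p)=f(p+\beta-\alpha)$, so $f$ is invariant under translation by $2\alpha$. In Fourier coefficients this says $\hat f_k(e^{2\pi i k\,2\alpha}-1)=0$. Irrationality of $\alpha$ then leaves only $\hat f_0\ne0$, so $f$ is constant. If $f\neq0$ this forces $\lambda=1$, and the statement holds for every eigenvalue on the unit circle.

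I expect the only delicate point to be the justification of the chain rule across the discontinuity. I would handle it by testing the weak derivative of $\lambda f$ against smooth functions supported in each strip, where $T$ is a genuine affine diffeomorphism onto its image. The $L^2$ gradient obtained there agrees a.e.\ with the weak gradient of the $H^1$ function $\lambda f$.
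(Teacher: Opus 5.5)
Your proposal is correct and follows essentially the same route as the paper. Both use the branchwise chain rule, justified because $\lambda f\in H^1$, to get $U\partial_q f=\lambda\partial_q f$ and $U(\partial_q f+\partial_p f)=\lambda\partial_p f$; unitarity against the linear growth $U^n\partial_p f=\lambda^n(\partial_p f-n\partial_q f)$ then forces $\partial_q f=0$, and the resulting function of $p$ is invariant under rotation by the irrational $2\alpha$, hence constant.
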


\begin{proof} Write $g=\partial_q f$, $h=\partial_p f$ and suppose
$Uf=\lambda f$. On the two continuity domains the derivative of
$T_{\alpha,\beta}$ is $A$, and the chain rule gives, almost everywhere,
$Ug=\lambda g$ and $U(g+h)=\lambda h$. Thus $Uh=\lambda(h-g)$ and
$U^n h=\lambda^n(h-ng)$. Since $U$ is unitary on $L^2$,
$\|h-ng\|_2=\|h\|_2$ for all $n$, so $g=0$ and $f(q,p)=F(p)$. The eigenvalue
equation on the two half-tori gives
$F(p+\beta+\alpha)=\lambda F(p)=F(p+\beta-\alpha)$, so $F$ is invariant under
rotation by $2\alpha$ and hence constant.
\end{proof}

Since $Uf=\lambda f\in H^1$, its global weak derivatives coincide with the
branchwise ones, and the jumps of $T$ do not contribute. The proposition does
not prove weak mixing, whose spectral criterion concerns all $L^2$
eigenfunctions: obtaining transverse regularity from a measurable
eigenfunction is precisely the hard part of the billiard results, whose
estimates do not control the jumps of our map.

Call $X=\thetafun$ a coboundary if $X=g-g\circ T$ almost everywhere for a
measurable $g$, so that $S_n=g-g\circ T^n$.

\begin{corollary}\label{cor:coboundary}
(i) If $X$ is a coboundary, then $-1$ is an eigenvalue of $U$, with
eigenfunction $e^{i\pi g}$; in particular $T^2$ is not ergodic.
(ii) For irrational $\alpha$ and any $\beta$, $X$ is not a coboundary with
$g\in H^1(\mathbb T^2)$.
\end{corollary}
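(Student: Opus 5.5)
For part (i) we exponentiate the coboundary equation. Since $X=\thetafun$ takes only the values $\pm1$, $e^{i\pi X}=-1$ identically. Suppose $X=g-g\circ T$ almost everywhere with $g$ measurable and real. Put $\phi=e^{i\pi g}$, which has modulus one and lies in $L^2$. Then
\[
U\phi=e^{i\pi g\circ T}=e^{i\pi(g-X)}=-\phi
\]
almost everywhere, so $-1$ is an eigenvalue of $U$ with eigenfunction $\phi$. Applying $U$ twice gives $U^2\phi=\phi$, so $\phi$ is a $T^2$-invariant function. It is not almost everywhere constant: if it were, then $U\phi=\phi$, which together with $U\phi=-\phi$ would force $\phi=0$, contradicting $|\phi|=1$. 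Hence $T^2$ is not ergodic. If one wants complex-valued $g$ as well, the same computation goes through verbatim, since $e^{i\pi X}=-1$ does not use reality of $g$. The only loss is that $\phi$ need not have modulus one, and nonvanishing has to be argued from $e^{i\pi g}\neq0$ directly.

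For part (ii) we argue by contradiction. Suppose $X=g-g\circ T$ with $g\in H^1(\mathbb T^2)$, and first reduce to real $g$. Since $X$ is real, taking real parts gives $X=\operatorname{Re}g-\operatorname{Re}g\circ T$. The imaginary part then satisfies $\operatorname{Im}g\circ T=\operatorname{Im}g$, so it is an $H^1$ eigenfunction with $\lambda=1$. By Proposition~\ref{prop:H1} it is constant and can be discarded. So assume $g$ is real.

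The tempting route is to show $\phi=e^{i\pi g}$ lies in $H^1$ and then apply Proposition~\ref{prop:H1} with $\lambda=-1$. This is the main obstacle: $g$ need not be bounded in two dimensions, so the chain rule gives only $\nabla\phi=i\pi\phi\nabla g\in L^2$. In fact this is enough, because $|\phi|=1$ gives $|\nabla\phi|=\pi|\nabla g|$, and composition of a Lipschitz function with an $H^1$ function stays in $H^1$, so $\phi\in H^1$. Proposition~\ref{prop:H1} then forces $\phi$ to be constant, which contradicts $U\phi=-\phi$ with $\phi\neq0$.

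An alternative that avoids exponentiation mimics the proof of Proposition~\ref{prop:H1} directly. Away from the cuts, $\nabla X=0$, so the chain rule gives the pair of relations $U\partial_q g=\partial_q g$ and $U(\partial_q g+\partial_p g)=\partial_p g$. These again force $\partial_q g=0$, so $g=G(p)$. Then $G(p)-G(p+\beta\pm\alpha)=\pm1$ on the respective halves. Since the left side depends only on $p$ while the right side depends on $q$, this is impossible. This second argument is the one I would write out, noting as in the remark after Proposition~\ref{prop:H1} that the weak derivatives of $g\circ T$ agree with the branchwise ones because $g\circ T=g-X$ and $X$ is locally constant off the cuts. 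A subtle point is that $X$ itself is not in $H^1$, so the global relation holds only piecewise; this suffices, because all the identities are almost-everywhere statements on the two continuity domains.
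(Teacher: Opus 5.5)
Part (i), and the exponentiation argument you give first for (ii), are exactly the paper's proof. Your remark that $t\mapsto e^{i\pi t}$ is $C^1$ with bounded derivative, so that $e^{i\pi g}\in H^1$ with $\nabla e^{i\pi g}=i\pi e^{i\pi g}\nabla g$ even for unbounded $g$, correctly supplies the one step the paper states without comment. (For complex $g$ in (i), simply replace $g$ by $\operatorname{Re}g$, which is again a transfer function; $e^{i\pi g}$ itself need not lie in $L^2$.)

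The direct argument, the one you say you would write out, fails at its concluding step. Once $g=G(p)$, the coboundary equation reads $G(p)-G(p+\beta+\alpha\theta(q))=\theta(q)$. The left side depends on $q$ through $\theta(q)$, just as the right side does, so the claim that it depends only on $p$ is false. What you actually have are two separate equations, $G(p)-G(p+c_+)=1$ and $G(p)-G(p+c_-)=-1$ for a.e.\ $p$, with $c_\pm=\beta\pm\alpha$. Two different functions of $p$ equal to two different constants is not yet a contradiction.

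The missing step is an integration. Since $G\in L^2(\mathbb T)\subset L^1(\mathbb T)$ and Lebesgue measure is translation invariant, $\int_{\mathbb T}[G(p)-G(p+c_+)]\,dp=0\neq1$.

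With this repair the route is correct, genuinely different from the paper's, and in fact stronger. Neither the unitarity step giving $\partial_q g=0$ nor the integration uses irrationality of $\alpha$, so it shows that $X$ is not an $H^1$ coboundary for any $(\alpha,\beta)$. The paper's route cannot give this, because it passes through Proposition~\ref{prop:H1} with $\lambda=-1$, which genuinely needs $\alpha$ irrational. For $\alpha=1/4$, $\beta=0$, the function $e^{4\pi ip}$ is a non-constant $H^1$ eigenfunction with eigenvalue $-1$. What the paper's route buys is brevity: (ii) becomes a one-line consequence of the mechanism in (i) and the proposition.
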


\begin{proof} Since $e^{i\pi X}\equiv-1$, $f=e^{i\pi g}$ satisfies $Uf=-f$.
If $g\in H^1$, then $f\in H^1$, contradicting Proposition~\ref{prop:H1}.
\end{proof}

By a classical criterion for ergodic $T$~\cite[Theorem~11.8]{Schmidt1977},
$X$ is a measurable coboundary
if and only if the laws of $S_n$ are tight; for $L^2$ transfer functions,
if and only if $\Var(S_n)$ is bounded~\cite{Leonov1961}. Hence, if $T^2$ is
ergodic, $S_n$ cannot stay bounded in probability.

Since $X_j=\pm1$, $S_n\equiv n\pmod2$, so $(-1)^n=(-1)^{S_n}$ and
\begin{equation}
 (-1)^nX_n=(-1)^{S_n}\,(S_{n+1}-S_n).
 \label{eq:parity}
\end{equation}
The time-alternating symbol is therefore a function of the path of the walk
alone: a step from level $k$ contributes $(-1)^k$ times its direction. This
identity is used in Sec.~\ref{sec:resonances}.

\subsection{Bounded runs and the word map}
\label{sec:bounded-runs}

On the half-torus $\{\thetafun=s\}$, $s=\pm1$, the map is the skew shift
\begin{equation}
 F_c(q,p)=(q+p+c,\;p+c)\pmod 1,\qquad c=c_s=\beta+s\alpha .
 \label{eq:skew-shift}
\end{equation}

\begin{lemma}\label{lem:runs}
If $c_s$ is irrational, there is $L_s<\infty$ such that every orbit of
$T_{\alpha,\beta}$ contains no run of more than $L_s$ consecutive symbols
equal to $s$.
\end{lemma}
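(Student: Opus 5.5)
A run of $m$ consecutive symbols $s$ starting at $(q_0,p_0)$ means that $q_j\in H_s$ for $j=0,\dots,m-1$, where $H_{+1}=[0,1/2)$ and $H_{-1}=[1/2,1)$. During such a run the orbit follows the single skew shift $F_c$ of \eqref{eq:skew-shift} with $c=c_s$. So the plan is to bound the time any orbit of $F_c$ can spend in the strip $\{q\in H_s\}$. This is purely a statement about the skew shift and does not involve the switching dynamics.

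First, I would write the explicit orbit: $p_j=p_0+jc$ and $q_j=q_0+jp_0+\frac{j(j+1)}{2}c \pmod 1$. The requirement is then that the quadratic polynomial $P(j)=q_0+jp_0+\binom{j+1}{2}c$ stays in an interval of length $1/2$ modulo $1$ for $j=0,\dots,m-1$. Second, I would eliminate the unknowns $q_0,p_0$ by taking second differences. Set $a_j=\{P(j)-\text{(center of }H_s)\}\in(-1/4,1/4]$ after recentring, the half-open endpoint conventions being harmless. Then $P(j+2)-2P(j+1)+P(j)=c$, so $a_{j+2}-2a_{j+1}+a_j\equiv c \pmod 1$.

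The first differences $d_j=a_{j+1}-a_j$ are real numbers in $(-1/2,1/2)$ congruent to $P(j+1)-P(j)=p_0+(j+1)c$. One cannot reduce directly, because $d_{j+1}-d_j$ lies in $(-1,1)$ and is congruent to $c$, so it takes one of at most two values, $\{c\}$ or $\{c\}-1$. To handle this I would use the van der Corput/Weyl idea in the form of a compactness argument. Suppose runs of arbitrary length exist. Shift indices to the middle of each run, pass to a limit point of the initial data $(q_0,p_0)$ in the compact torus, and use continuity of each finite iterate of $F_c$; together these give an orbit $x$ of $F_c$ with $q$-coordinate in the closed half-circle $\overline{H_s}$ for all times $j\in\mathbb Z$.

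Third, I would contradict this using minimality. For irrational $c$ the skew shift $F_c$ is minimal, indeed uniquely ergodic (Furstenberg), so every orbit is dense in $\mathbb T^2$. A dense orbit must enter the open set $\{q\in \operatorname{int}H_{-s}\}$, which contradicts the orbit staying in $\overline{H_s}$. So run lengths are bounded, say by $L_s$, uniformly over all initial points.

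The main obstacle is the compactness step. The strip $\{q\in H_s\}$ is not closed, so the limit orbit lies only in the closure. This is fine because its complement still contains a nonempty open strip. The argument also needs minimality of the skew shift for every orbit, not just almost every one, and I would quote this as the classical result. If one wanted an explicit $L_s$, one could instead use the uniform Weyl bound for $\sum_j e^{2\pi i kP(j)}$, which is uniform in $q_0,p_0$. The fraction of times in $H_s$ tends to $1/2$ uniformly, so a run longer than some $L_s$ is impossible.
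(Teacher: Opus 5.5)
Your argument is correct and is essentially the paper's proof. Both reduce a run of $s$'s to an orbit segment of the skew shift $F_{c_s}$ inside the strip $\{\theta=s\}$, then combine Furstenberg's minimality with compactness of $\mathbb T^2$: the paper takes a finite subcover $F_c^{-k}U$, $k\le K$, with $U$ the interior of $\{\theta=-s\}$, which gives $L_s\le K$ directly, while you use the sequential form (a limit orbit trapped in the closed strip), and your second-difference computation is not needed.
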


\begin{proof} For irrational $c$ the skew shift $F_c$ is
minimal~\cite{Furstenberg1961}: $F_c^n(q,p)=(q+np+c\,n(n+1)/2,\;p+nc)$, and for
$(k_1,k_2)\ne0$ the sequence $k_1q_n+k_2p_n$ is a polynomial in $n$ with
irrational leading coefficient, hence equidistributed by Weyl's criterion.
Let $U$ be the interior of $\{\thetafun=-s\}$. The open sets $F_c^{-k}U$,
$k\ge0$, cover $\mathbb T^2$, and by compactness finitely many, $k\le K$,
already do. A run of $r$ symbols $s$ has $F_{c_s}^i(x_j)\notin U$ for
$i<r$, so $r\le K$.
\end{proof}

The bound holds for every orbit, but it does not make every orbit dense. Even
density of all orbits would not, together with the invariance of Lebesgue
measure, imply ergodicity (minimal interval exchanges need not be uniquely
ergodic~\cite{KeynesNewton1976}), and unique ergodicity would not imply weak
mixing, as the skew shift itself ($\alpha=0$) shows. It excludes broad (infinite-variance) waiting-time
distributions, and it makes the interval statistics of
Sec.~\ref{sec:intervals} exactly computable. The bound is $L_s=L(c_s)$ with a
single function $L(c)=L(-c)=L(1-c)$, the longest run of $F_c$ in a
half-torus, so $L_+=L_-$ at $\beta=0$ but not in general (for the generic
benchmark $L_+=17$, $L_-=7$). $L(c)$ is small for most $c$ and diverges, as
$|c-a/b|^{-1/2}$, only at the rationals $a/b$ for which $F_{a/b}$ has a
periodic orbit inside one half-torus, such as $0$, $1/3$, $2/5$, $1/2$
(Fig.~\ref{fig:runs} in the supplementary material).

The argument extends to words. For $w=(w_0,\dots,w_{m-1})$ with partial sums
$s_k=w_0+\dots+w_{k-1}$, the lift gives on the cylinder $[w]$
\begin{equation}
 T^m(q,p)=\bigl(q+mp+d_w,\;p+c_w\bigr),\qquad
 c_w=m\beta+\alpha s_m,
 \label{eq:word-map}
\end{equation}
with $d_w=\beta m(m+1)/2+\alpha\sum_{k=1}^m s_k$. If $c_w$ is irrational,
repetitions of $w$ are uniformly bounded. At $\beta=0$, however, every
\emph{zero-sum} word ($s_m=0$) has $c_w=0$: on $[w]$ the block map is a pure
twist, and the lines $mp+d_w\in\mathbb Z$ carry sets of period-$m$ points.
These are the horizontal segments of periodic points of
Ref.~\cite{HorvatEtAl2009}; for $w=+-$ they lie at $p=1-\alpha/2$ and
$p=\tfrac12-\tfrac\alpha2$, with lengths $\alpha/2$ and $(1-\alpha)/2$. A
point at momentum distance $\delta$ from a segment of length $L$ is
translated along it by $m\delta$ per block, i.e.\ by $\delta$ per step, and
repeats $w$ until it leaves the segment, for a time $L/|\delta|$ if it enters
at one end (Fig.~\ref{fig:twist} in the supplementary material). The repetition containing a point,
i.e.\ the maximal stretch of consecutive steps along which its symbols repeat
$w$ periodically (about $\ell/m$ copies of $w$ for a stretch of $\ell$ steps),
is therefore at least $\ell$ steps long if $|\delta|\le L/\ell$, i.e.\ in a
strip of area $2L^2/\ell$ along the segment, and likewise along its $m-1$
images under $T,\dots,T^{m-1}$, which are translates of it. Since Lebesgue
measure is invariant, the probability that a given time belongs to a
repetition of $w$ of length at least $\ell$ (the fraction of time averaged over
Lebesgue-distributed initial conditions) is the area of these strips,
\begin{equation}
 F_w(\ge\ell)\simeq\frac{K_w}{\ell},\qquad K_w=2m\sum_jL_j^2,
 \label{eq:trap-law}
\end{equation}
the sum running over the segments of periodic points of $w$ itself, the
maximal intervals of $[w]$ on the resonant lines $p=(j-\alpha W)/m$,
$W=\sum_{k=1}^ms_k$. The images of these segments are the segments of the
cyclic rotations of $w$, so $K_w$ is the same for all rotations, and totals
over words count each cyclic class once. The law holds for $\ell\gg m$; at
finite $\ell$ the ends of the segments, where the strip meets the boundary of
$[w]$, give corrections (Sec.~\ref{sec:resonances}). For $w=+-$,
$K_{+-}=4[(\alpha/2)^2+((1-\alpha)/2)^2]=\alpha^2+(1-\alpha)^2$. For small
$\beta>0$ the drift $c_{+-}=2\beta$ makes $\delta$ grow linearly in time, so
that after $k$ steps the position along the segment is $\delta_0k+\beta k^2/2$;
the longest repetition is the one whose parabola just fits in the longer
segment, of length $L_{\max}=\max(\alpha,1-\alpha)/2$, which gives the maximal
length $2\sqrt{2L_{\max}/\beta}$. The lift also yields exact word
probabilities; for instance, at $\beta=0$, $P(-+-+)=1/12$ for every $\alpha$
(supplementary material, Sec.~\ref{sm:words}).

\subsection{Generic parameters and a correction}

We call $(\alpha,\beta)$ generic if $1$, $\alpha$, $\beta$ are linearly
independent over $\mathbb Q$. Then every $c_w$ is irrational, every word has
uniformly bounded repetitions, and the map has no periodic orbits, since a
period-$m$ orbit with word $w$ requires $c_w\in\mathbb Z$. All generic pairs
used below satisfy this condition, by the transcendence of $e$ and the linear
independence over $\mathbb Q$ of square roots of distinct square-free
integers, except $(\pi-3,1/e)$: the rational independence of $1$, $\pi$,
$1/e$ would follow from the algebraic independence of $\pi$ and $e$, which is
open.

\paragraph{Correction to Ref.~\cite{HorvatEtAl2009}.} There the generic case
was defined by $\alpha,\beta$ irrational with $\beta\notin\mathbb Z+\alpha\mathbb Z$,
and Lemma~1 states that the map then has no periodic orbits. The necessary
condition $t\beta+\alpha S_t\equiv0\pmod1$ for a period-$t$ orbit excludes
$\beta\in\mathbb Z+\alpha\mathbb Z$ only for $t=1$; in general it gives
$\beta\in\mathbb Q+\mathbb Q\alpha$, so the hypothesis must be strengthened
to rational independence. For example, $\alpha=1/e$, $\beta=(1-\alpha)/2$
satisfy the original condition, yet $w=+{+}{+}-$ has $c_w=4\beta+2\alpha=2$,
and the line $p\simeq0.97409$ carries a segment of length $0.342$ of
period-four points, which we verified in $50$-digit arithmetic. The other
rigorous statements of Ref.~\cite{HorvatEtAl2009} used here are not
affected.

\paragraph{Why ergodicity is hard.} On each half-torus the map is a uniquely
ergodic skew shift~\cite{Furstenberg1961}, and Veech proved strict ergodicity
of $\mathbb Z_2$-extensions of rotations by step cocycles~\cite{Veech1969};
both arguments need a factor. Here the lifted momentum is a rotation only
modulo the dense subgroup $\mathbb Z+\alpha\mathbb Z$: the symbol feeds back
into the base, and there is no factor. The billiard
theorems~\cite{KerckhoffMasurSmillie1986,ChaikaForni2026,OpenAIErgodicity2026,OpenAIWeakMixing2026}
rest on Teichm\"uller dynamics and do not transfer to a model map.
Boshernitzan-type criteria~\cite{Boshernitzan1992} need word complexity
growing linearly along a subsequence, whereas ours grows as $\simeq0.66\,n^3$
(supplementary material, Sec.~\ref{sm:words}), and the renormalization schemes for kicked
oscillators~\cite{KouptsovLowensteinVivaldi2002}, of which the triangle map
is the parabolic member $q_{n+1}-2q_n+q_{n-1}=\beta+\alpha\thetafun(q_n)$, do
not cover generic parameters. To our knowledge, minimality, unique
ergodicity and ergodicity of the generic map remain open; we state the
expected behavior as Conjecture~\ref{conj:generic} in
Sec.~\ref{sec:discussion}.

\section{Generic regime: correlated normal diffusion}
\label{sec:generic}

We use the benchmark $\alpha=e^{-1}$, $\beta=(\sqrt5-1)/2$ of our earlier
work.

\subsection{Diffusion coefficient}

For $4096$ orbits followed to $n=10^8$, the local log-log slopes of
$\Var(S_n)$ and of the mean running maximum
$\langle S_{\max}(n)\rangle=\langle\max_{k\le n}S_k\rangle$ are $0.996$ and
$0.502$, and $\Var(S_n)/n=3.079$ at the largest checkpoint; the amplitude
$\langle S_{\max}(n)\rangle/\sqrt n=1.408$ matches the Brownian value
$\sqrt{2D/\pi}=1.400$. Correlations from $64$ trajectories of length $2^{26}$
agree with the values computed without sampling from the polygonal cylinders
(App.~\ref{app:methods}) for $k<80$: the mean squared deviation in units of
the bootstrap error is $0.88$ per lag, as expected for purely statistical
scatter.
One has $C(1)=C(2)=0$ exactly for all parameters (for fixed $q_0$, the
uniform $p_0$ makes $q_1$ uniform, and
$q_2=2q_1-q_0+\alpha\thetafun(q_1)+\beta$ is the image of $q_1$ under a
two-to-one map that preserves Lebesgue measure), but the next correlations are large, $C(4)\simeq0.081$,
$C(7)\simeq0.167$ (supplementary material, Fig.~\ref{fig:correlations}), and $G_K$ reaches $2.04$
at $K=10$ and $2.66$ at $K=30$ (Fig.~\ref{fig:transport-regimes}, middle
left).
The rest of the excess is built by a band of weaker, sign-oscillating
correlations at $10^2\lesssim k\lesssim2\times10^3$, after which $G_K$ is
flat: $G_{10^3}=3.064\pm0.003$, $G_{10^4}=3.091\pm0.009$. By
Eq.~\eqref{eq:Dn-cesaro}, $D_n=D-2M/n+o(1/n)$ with $M=\sum_kkC(k)\simeq100$,
which explains the slow approach of the finite-time variance. A
non-oscillating tail $C(k)\simeq ak^{-3/2}$, as reported for the smooth
observable $\cos\pi y$~\cite{CasatiProsen2000}, is not seen ($a=0.08\pm0.16$ over
$2\times10^3\le k\le10^4$). The averaged periodogram is flat below
$\omega\sim10^{-3}$ at $3.11\pm0.05$ (bottom left). All estimators agree
(Fig.~\ref{fig:transport-regimes}, left):
\begin{equation}
 \Var(S_n)\sim Dn,\qquad S_X(\omega)\to D,\qquad D=3.09\pm0.01 .
 \label{eq:generic-summary}
\end{equation}
The value is not universal: away from the edges of the parameter square $D$
ranges from $0.2$ to $14$ (median $2.3$), with dark lenses at rational $\beta$ and
bright bands near low-order resonances (supplementary material, Fig.~\ref{fig:Dmap}); four other
generic pairs give $D=2.15$--$2.94$ (supplementary material, Sec.~\ref{sm:generic}), and the pair A
of Sec.~\ref{sec:asymmetric} gives $D=1.19$, close to the Bernoulli value. The
generic symbolic walk is \emph{normally diffusive but correlated}.

The fluctuations are Gaussian on the explored scales: the Kolmogorov distance
from the standard Gaussian of the normalized sum $S_n/\sqrt{\Var(S_n)}$ is $0.013$--$0.017$ for
$10^3\le n\le3.4\times10^7$, at the level of sampling noise, the skewness
decays as $n^{-1/2}$, and the correlation between $S_n$ and $S_{2n}-S_n$ is
below $0.02$ for $n\ge10^4$. This supports, without proving, the central limit
theorem of Conjecture~\ref{conj:generic} (Sec.~\ref{sec:discussion}), with
asymptotically independent increments~\cite{DuarteQueiros2008}.

\subsection{Distribution versus order of inter-event intervals}
\label{sec:intervals}

Let $\tau_i$ be the intervals between consecutive occurrences of $+1$,
observed from an occurrence at time $0$, and let
$t_J=\tau_0+\dots+\tau_{J-1}$ be the time of the $J$-th return. By Kac's
lemma the mean interval is $2$, and since $[0,t_J)$ contains $J$ symbols $+1$
and $t_J-J$ symbols $-1$, $\sum_{i<J}(\tau_i-2)=-S_{t_J}$ exactly. Hence $D=\tfrac12G(\infty)$ with
$G(K)=v+2\sum_{k=1}^{K}\operatorname{Cov}(\tau_0,\tau_k)$, $v=\Var(\tau_0)$,
whereas an independent renewal process with the same interval law has
$D_R=v/2$~\cite{AltmannCristadoroDegliEsposti2012}. By Lemma~\ref{lem:runs}
the intervals are bounded, and the set of initial conditions producing a
given run is a finite union of polygons, so the interval law is known
exactly and no shuffled surrogate is needed:
\begin{center}
\begin{tabular}{lcccc}
\toprule
 & longest runs $+/-$ & $\max\tau$ & $v$ & $D_R=v/2$\\
\midrule
generic (benchmark pair) & $17/7$ & $8$ & $1.7085$ & $0.854$\\
$\beta=0$, $\alpha=1/e$ & $11/11$ & $12$ & $2.1493$ & $1.075$\\
$\beta=0$, $\alpha=\sqrt2-1$ & $12/12$ & $13$ & $2.1254$ & $1.063$\\
Bernoulli & $\infty$ & $\infty$ & $2$ & $1$\\
\bottomrule
\end{tabular}
\end{center}
If the intervals were independent with the same law, the walk would have
$D_R=v/2$: $0.85$ in the generic regime and $1.07$ at $\beta=0$, both close to
the Bernoulli value. The measured transport is very different in both
regimes, so it is set by the order of the intervals, i.e.\ by the covariances
in $G(K)$ [Fig.~\ref{fig:persistence}(c)]. For the benchmark pair they are
positive (long intervals tend to follow long ones) and $\tfrac12G(K)$ grows
from $0.85$ to $3.15\pm0.03$ at $K=4096$, a factor of about $3.7$, consistent
with the more precise $D=3.09\pm0.01$ obtained from the symbol sum; $80\%$
of the increase occurs within twenty events. At $\beta=0$ they alternate in sign,
starting with $\operatorname{Cov}(\tau_0,\tau_1)\simeq-0.43$ (a long interval
is followed by a short one that compensates it), and $\tfrac12G(K)$ falls
from $1.07$ to $0.05$--$0.15$ within about ten events; the remainder,
$\tfrac12G\simeq0.005$ at $K\sim10^3$, decays slowly. Bounded intervals and
their order thus produce both enhanced and suppressed transport, whereas
anomalous transport is usually attributed to broad waiting-time
distributions.

\subsection{Persistence}
\label{sec:persistence}

Let $Q_\pm(n)$ be the probability that the walk has not entered the positive
(negative) half-line up to time $n$, i.e.\ that $\pm S_k\le0$ for all
$1\le k\le n$. For a Bernoulli walk (independent symmetric steps $\pm1$, hence
the subscript) the two coincide and the Sparre Andersen
theory~\cite{SparreAndersen1953,SparreAndersen1954} gives
\begin{equation}
 Q_B(n)=\binom{n}{\lfloor n/2\rfloor}2^{-n}\sim\sqrt{\frac{2}{\pi n}} ,
 \label{eq:sa-lattice}
\end{equation}
the lattice counterpart of the universal $n^{-1/2}$ law for continuous
symmetric steps.
The involution $(q,p)\mapsto(-q,-p)$ preserves Lebesgue measure, reverses
$\thetafun$ and conjugates $T_{\alpha,\beta}$ to $T_{\alpha,-\beta}$, so
$Q_+(n;\alpha,\beta)=Q_-(n;\alpha,-\beta)$ exactly. For the generic parameters
($10^6$ orbits to $n=2^{22}$) the effective exponent
$\theta_{\rm eff}=-d\ln Q/d\ln n$ is compatible with $1/2$ for $n\gtrsim10^2$,
but the amplitudes are not Bernoulli and differ between the sides,
$Q_+/Q_B\simeq1.75$ and $Q_-/Q_B\simeq1.32$
[Fig.~\ref{fig:persistence}(a,b)]. They are not fixed by the one-time
marginals either. For independent steps the survival probability is
determined by them through the Sparre Andersen identity
$\sum_{n\ge0}Q_+(n)s^n=\exp\bigl(\sum_{k\ge1}s^kP(S_k\le0)/k\bigr)$~\cite{SparreAndersen1954};
fed with the measured $P(S_k\le0)$ (and $P(S_k\ge0)$ for $Q_-$), it
reproduces simulated Bernoulli walks within $1\%$, but for the map the
measured $Q_+$ and $Q_-$ exceed its prediction by factors $1.38$ and $1.77$ at
$n=4096$: the excess persistence comes from temporal correlations, not from
the distribution of $S_k$.

At $\beta=0$ persistence is anomalously strong: with $10^6$ orbits to
$n=1.7\times10^7$, $Q/Q_B$ reaches $28$, and $\theta_{\rm eff}$ decreases from
$0.29$ at $n\sim10^2$ to $0.205\pm0.014$ for $10^6\lesssim n\le1.7\times10^7$,
identically for $\alpha=1/e$ and $\sqrt2-1$; a logarithmic decay is excluded,
and no limiting exponent is claimed. The long survivors are selected early by
thin bands of initial momentum (supplementary material, Sec.~\ref{sm:beta0}).

\begin{figure}[t]
  \centering
  \includegraphics[width=\textwidth]{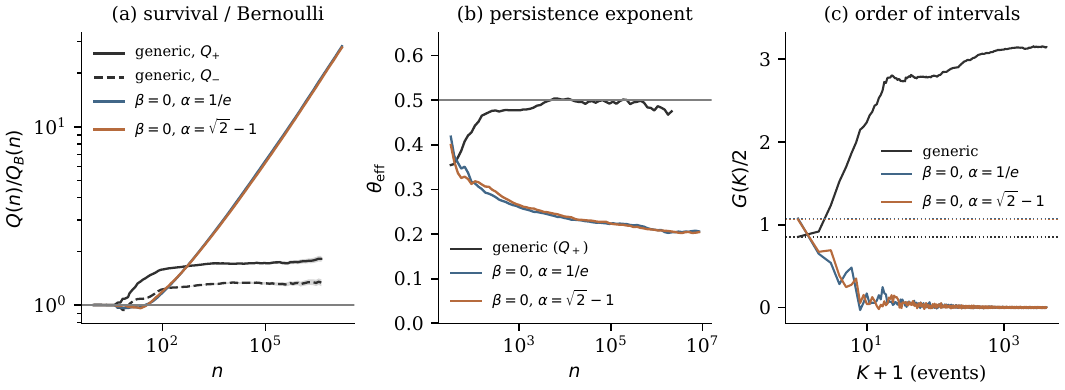}
  \caption{\textbf{Persistence and inter-event intervals.}
  (a) Survival probabilities relative to the lattice Bernoulli reference
  $Q_B$, Eq.~\eqref{eq:sa-lattice}: generic $Q_+$ (solid) and $Q_-$ (dashed)
  with Wilson $68\%$ bands, and $(Q_++Q_-)/2$ for $\beta=0$.
  (b) Effective persistence exponent over factor-four windows; the
  horizontal line is the Sparre Andersen value $1/2$.
  (c) Event-time Green--Kubo sums $G(K)/2$ of the intervals between $+$
  events; dotted lines mark the renewal values $v/2$ computed exactly from
  the interval law.}
  \label{fig:persistence}
\end{figure}

\section{The exceptional $\beta=0$ regime}
\label{sec:beta0}

\subsection{Variance, correlations and spectrum}

At $\beta=0$ double-precision orbits are not adequate: their symbol sequences
differ from the exact ones from $n\sim10^6$, and for an $\alpha$ with large
partial quotients the double-precision variance is three times too large at
$n\gtrsim5\times10^6$ (supplementary material, Sec.~\ref{sm:beta0}). We followed $2048$ orbits to
$n=2^{25}\simeq3.4\times10^7$ for $\alpha\in\{\sqrt2-1,\sqrt3-1,(\sqrt5-1)/2,1/e\}$
and for two values with large partial quotients, $\pi-3$ and
$[0;2,30,2,300,2,3000,1,1,\dots]$ (continued-fraction expansion), with replicate ensembles and one run to
$1.3\times10^8$. The variance curves are nearly independent of $\alpha$:
$\Var(S_n)\simeq13$, $33$--$38$, $65$--$73$ and $95$--$103$ at $n=10^2$,
$10^4$, $10^6$ and $3.4\times10^7$ (Fig.~\ref{fig:transport-regimes}, right),
with no feature at the convergent denominators. The local exponent
$d\ln\Var/d\ln n$ decreases from $0.3$ to $0.1$ without a plateau, and
$d\Var/d\ln n$ increases, so the growth is slower than any fixed accessible
power and faster than $\ln n$. Beyond this the law is not determined:
$a+b\ln^2n$, $a+c\ln n+b\ln^2n$, $a+b(\ln n)^p$ and $a+cn^\gamma$ fit equally
well, with model-dependent parameters (supplementary material, Sec.~\ref{sm:beta0}).

The Green--Kubo sums explain the suppression. From $1024$ trajectories of
length $2^{22}$, $G_K$ falls from $1$ to $0.45$ at $K=10$ and $0.09$ at
$K=100$, then oscillates around zero with decreasing amplitude ($-0.029$,
$0.022$, $-0.014$ at $K=10^3,10^4,10^5$, errors $0.002$--$0.004$): the
positive instantaneous variance is almost exactly cancelled by integrated
anticorrelations, $\sum_{k\ge1}C(k)\simeq-1/2$. The spectral density vanishes
at low frequency with local exponent $\eta(\omega)$ close to one and
arithmetic oscillations; scale by scale $1-\eta(2\pi/n)$ matches the local
variance exponent, as Eq.~\eqref{eq:spectral-diffusion} requires, and a
log-corrected linear spectrum $S_X\sim|\omega|\ln(1/|\omega|)$, which gives
$\ln^2n$, is consistent with both drifts without being singled out. Exact Markov
approximations of order $J$, built without sampling on the return map to the
starts of $+$ runs (supplementary material, Sec.~\ref{sm:markov}), show the same structure on
short scales ($n\lesssim10^2$): at $\beta=0$ their diffusion coefficient falls
as $D_J\simeq0.78/J$ ($14\le J\le28$), against $D_J\simeq2.9$ in the generic
case, and their transfer function, the solution of the Poisson equation,
grows only logarithmically in $J$ instead of as $\sqrt J$, so that at every
finite order $X$ is close to a coboundary, as in
Conjecture~\ref{conj:cohomology}.

Thus, on the explored scales ($n\le3.4\times10^7$, $K\le1.3\times10^5$,
$\omega\gtrsim3\times10^{-6}$), $\Var(S_n)/n\to0$, $D=0$ and $S_X(\omega)\to0$,
while $\Var(S_n)$ grows without saturation. If the growth persists, $X$ is
not an $L^2$ coboundary (Sec.~\ref{sec:rigorous}: an $L^2$ transfer function
would keep $\Var(S_n)$ bounded), although its Green--Kubo sum vanishes as for
a coboundary. The bulk of
the distribution widens (interquartile range $4\to12$ from $n=10$ to
$3.4\times10^7$), and $S_n/\sigma_n$ has a stable, $\alpha$-independent,
non-Gaussian law with kurtosis $\simeq5$.

\begin{figure}[t]
  \centering
  \includegraphics[width=\textwidth]{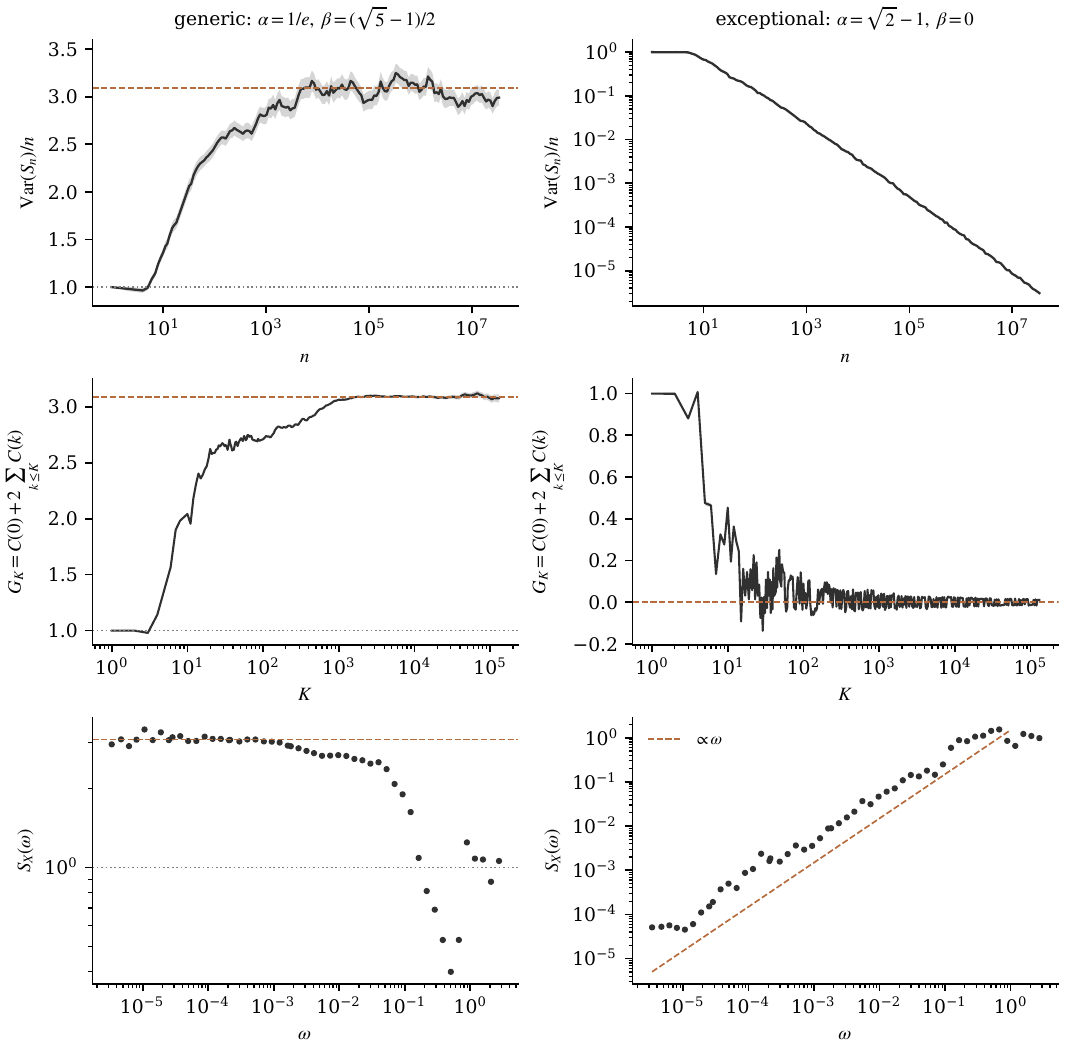}
  \caption{\textbf{Two transport regimes.}
  Left: generic $\alpha=1/e$, $\beta=(\sqrt5-1)/2$. Right: $\alpha=\sqrt{2}-1$,
  $\beta=0$. Top: $\Var(S_n)/n$ ($2048$ orbits). Middle: Green--Kubo partial
  sums $G_K$ ($64$ trajectories of length $2^{26}$, generic; $1024$ of
  length $2^{22}$, $\beta=0$). Bottom: log-binned spectral density
  ($32$ trajectories of length $2^{25}$). Bands are one standard error;
  dashed lines mark $D=3.09$, $0$ and $S_X\propto\omega$; dotted lines the
  Bernoulli value $1$. All data from the exact-phase integrator.}
  \label{fig:transport-regimes}
\end{figure}

\subsection{Parity and zero-sum-word resonances}
\label{sec:resonances}

The walk $S_n$ moves on the integer lattice, and each step crosses one edge
$[k,k+1]$ of it ($k\in\mathbb Z$ a level, not a symbol). By the parity
identity~\eqref{eq:parity}, the alternating sum $A_n=\sum_{j<n}(-1)^jX_j$ adds
$(-1)^k$ for every crossing of the edge $[k,k+1]$, in either direction: $+1$
for edges with an even lower end and $-1$ for edges with an odd one. A walk that oscillates over few edges, as in a repetition of
$+-$, therefore has an almost ballistic $A_n$, whereas a walk spreading over
many levels crosses both kinds equally often. Since $A_n/n=\frac1n\sum_{j<n}(-U)^jX$
is an ergodic average for the unitary operator $-U$, the mean ergodic theorem
gives $A_n/n\to P_{-1}X$ in $L^2$, the projection of $X$ on the eigenspace
$\{Uf=-f\}$; equivalently, $\mathbb E|A_n/n|^2\to\mu_X(\{\pi\})$, the mass at
the frequency $\pi$, that of the period-two oscillation $(-1)^j=e^{i\pi j}$, of
the spectral measure of $X$, defined by $C(k)=\int e^{ik\omega}d\mu_X(\omega)$.
A positive limit would give an eigenfunction with $Uf=-f$, hence a
non-constant $T^2$-invariant function: $T^2$ would not be ergodic, the same
obstruction that a coboundary would create (Corollary~\ref{cor:coboundary}).
Conjecture~\ref{conj:cohomology} asserts that neither occurs. In the generic case $\Var(A_n)/n\simeq1.7$ is constant. At
$\beta=0$, $\Var(A_n)/n^2$ decreases slowly, from $0.051$ to $0.0091$
($\alpha=\sqrt2-1$) between $n=2^{10}$ and $2^{22}$, as $n^{-0.2}$, and $S_X$
has a peak at $\omega=\pi$ of height $\simeq900$ against a median of $0.8$,
with weaker peaks at $2\pi/3$, $\pi/2$, $\pi/3$, $\pi/4$. At $\beta=0$ the
momentum stays on the lattice $p_0+\alpha\mathbb Z$, and by
Eq.~\eqref{eq:parity} the sign of each $+-$ burst is fixed by the level on
which it occurs, so successive bursts add coherently at $\omega=\pi$.

These resonances are seen directly in the orbits. Let $F_m(\ge\ell)$ be the
fraction of time spent inside repetitions, at least $\ell$ steps long, of
zero-sum words of length $m$; Eq.~\eqref{eq:trap-law} predicts
$\ell F_m(\ge\ell)=K_m=\sum_wK_w$, independent of $\ell$. At $\beta=0$ each
orbit keeps its momentum lattice $p_0+\alpha\mathbb Z$ and samples a single
distance $\delta$ from each segment, so we use $\beta=10^{-7}$, which lets
the momentum drift while cutting repetitions only beyond
$2\sqrt{2L_{\max}/\beta}\simeq5\times10^3$. Over $10m\le\ell\le100m$, for
$\alpha=\sqrt2-1$ and $m=2,4,6,8$, we find $\ell F_m=0.52$--$0.59$,
$1.48$--$1.55$, $0.38$--$0.43$, $0.53$--$0.58$ against the exact
$K_m=0.515$, $1.373$, $0.373$, $0.547$; $\ell F_m$ decreases with $\ell$
towards $K_m$, the excess at $\ell\simeq10m$ being a finite-$\ell$ correction
[the same law for the deformed map is shown in Fig.~\ref{fig:asymmetric}(b)]. For $\beta=10^{-3},\dots,10^{-6}$ the
longest $+-$ repetition agrees with the cutoff $2\sqrt{2L_{\max}/\beta}$ within
$1$--$5\%$ [supplementary material, Fig.~\ref{fig:twist}(c)]. The $K_m$ do not decrease with $m$:
resonant twists exist at all lengths.

\subsection{Crossover at small $\beta$}

For $\alpha=1/e$ and $\sqrt2-1$ and seven values $6\times10^{-5}\le\beta\le0.06$
($1024$ orbits to $n=2^{25}$ each), the variance follows the $\beta=0$ curve
up to a crossover time and then grows linearly. The ratio
$\Var_\beta(S_n)/\Var_0(S_n)$ is approximately a function of $n\sqrt\beta$
(scatter $\pm30\%$, not monotone in $\beta$, larger than the statistical
errors), with crossover time $n_c\simeq(2$--$3)\,\beta^{-1/2}$. The relevant
time is not $1/\beta$, at which the drift $n\beta$ is of order one, but the
time at which the quadratic phase $\beta n(n+1)/2$ of Eq.~\eqref{eq:exact-phase}
is. It is the lifetime of the resonances: for $\beta>0$ the block map
of a zero-sum word acquires the drift $c_w=m\beta$, and the longest $+-$ repetition is
$2\sqrt{2L_{\max}/\beta}\simeq1.5\,\beta^{-1/2}$. At the smallest $\beta$,
\begin{equation}
 D(\beta)\simeq\sqrt\beta\;\Var_0\bigl(S_{\beta^{-1/2}}\bigr),
 \label{eq:crossover}
\end{equation}
with ratio of the two sides $1.00$--$1.07$ ($\alpha=1/e$) and $0.8$--$1.25$
($\sqrt2-1$) for $\beta\lesssim6\times10^{-4}$: the variance accumulated by
the $\beta=0$ dynamics up to the crossover is then spread diffusively.

\section{Correlations of Fourier modes and mixing}
\label{sec:fourier}

Mixing is equivalent to the decay of $\langle f,U^nf\rangle$ for all
zero-mean $f\in L^2$, and it suffices to check the Fourier modes
$f_{k,l}=e^{2\pi i(kq+lp)}$. Their correlations were measured in double
precision in Refs.~\cite{CasatiProsen2000,HorvatEtAl2009}, with decay close
to $n^{-3/2}$ (generic) and $n^{-0.2}$ ($\beta=0$). We repeated the
measurement with the exact lift, recomputing $q_n$ and
$p_n=p_0+n\beta+\alpha S_n\bmod1$ at every step, for the modes $(1,0)$,
$(0,1)$, $(1,1)$, $(2,0)$, $(0,2)$, with $3$--$4\times10^9$ steps per
parameter pair (the mode $(1,-1)$ coincides with $(1,0)$ because
$q_n-p_n=q_{n-1}$).

In the generic case the correlations decay quickly
(Fig.~\ref{fig:fourier}): the rms $|C_{1,0}|$ is $7\times10^{-3}$ for
$30\le n<100$ and $5\times10^{-4}$ for $300\le n<1000$, with effective
exponents $1.4$--$2$, and the signal is below the noise floor
$6\times10^{-5}$ beyond $n\simeq10^3$; this is consistent with $n^{-3/2}$
over about one decade. At $\beta=0$ all modes decay slowly but steadily up to
$n=2.6\times10^5$, for $\alpha=1/e$ and $\sqrt2-1$, with effective exponent
$0.19$--$0.21$ for $(1,0)$, $(1,1)$, $(0,1)$, and no plateau. The
correlations are real: the symmetry $(q,p)\mapsto(-q,-p)$ conjugates the map
with $\beta$ to the map with $-\beta$, so at $\beta=0$ every $C_{k,l}$ is real.
(For any $\beta$ the modes $(k,0)$ are real as well, because the map is
reversible with the $q$-preserving reversor $(q,p)\mapsto(q,-p-\alpha\theta(q)-\beta)$;
the other modes are complex for $\beta\ne0$.) They carry the parity of the walk: the $q$-modes nearly vanish at odd lags, and
the $p$-modes alternate in sign [Fig.~\ref{fig:fourier}(c)].

The $p$-modes have a direct meaning for the cocycle: at $\beta=0$,
\begin{equation}
 C_{0,l}(n)=\mathbb E\,e^{2\pi il\alpha(S_{j+n}-S_j)}
 \label{eq:char-function}
\end{equation}
is the characteristic function of the increments of $S$. If $X$ were a
coboundary and $T$ mixing, it would tend to $|\mathbb E\,e^{2\pi il\alpha g}|^2$,
generically non-zero. The steady decay of $C_{0,1}$, from $0.39$ at
$n\simeq10$ to $0.094$ at $n\simeq5\times10^4$, is evidence against
tightness. Finally, worst-case Birkhoff averages of the Fourier modes over
random, near-cut and adversarially optimized initial conditions decay as
$N^{-1/2}$ for two generic pairs, like the uniquely ergodic skew shift, while
the same search recovers the periodic segments at $\beta=0$ and the
period-four orbits of the counterexample above (supplementary material, Sec.~\ref{sm:worst}).

\begin{figure}[t]
  \centering
  \includegraphics[width=\textwidth]{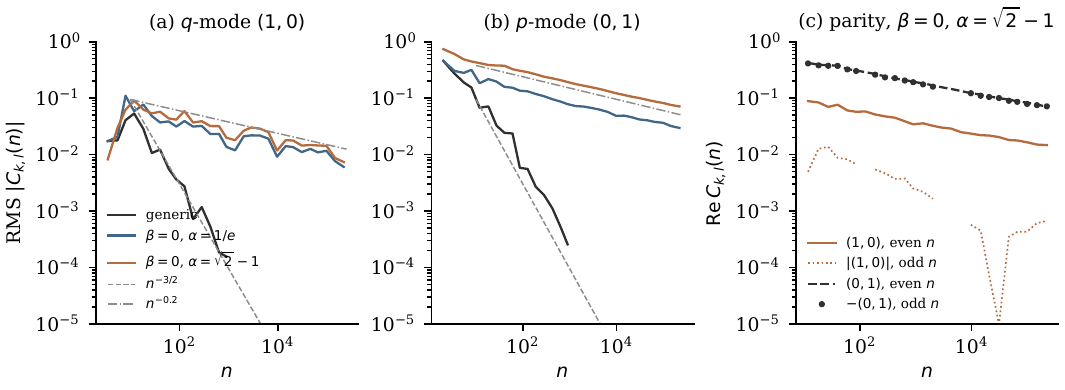}
  \caption{\textbf{Correlations of Fourier modes} (exact lift).
  (a), (b) Root-mean-square $|C_{k,l}(n)|$ over logarithmic lag bins, after
  subtraction of the statistical noise, for the modes $(1,0)$ and $(0,1)$,
  shown where the signal exceeds three standard errors (generic noise floor
  $\simeq6\times10^{-5}$). Grey lines: $n^{-3/2}$ (dashed) and $n^{-0.2}$
  (dash-dotted). (c) Parity structure at $\beta=0$, $\alpha=\sqrt2-1$: real
  part of $C_{1,0}$ at even lags and its modulus at odd lags, and $C_{0,1}$ at
  even lags compared with $-C_{0,1}$ at odd lags.}
  \label{fig:fourier}
\end{figure}

\section{An asymmetric deformation: resonances made visible}
\label{sec:asymmetric}

At $\beta=0$ the triangle map has two features that cannot be separated: the
zero-sum-word resonances and the zero mean drift of the lifted momentum.
Replacing $\thetafun$ by
\[
  \thetafun_\varepsilon(q)=\begin{cases}-1,& q\in[0,\varepsilon),\\ +1,& q\in[\varepsilon,1),\end{cases}
\]
in the same map interpolates between the triangle map ($\varepsilon=1/2$, up
to $\alpha\to-\alpha$ and $q\to q+1/2$) and the skew shift ($\varepsilon=0$);
the lift~\eqref{eq:exact-phase} and the exact integrator are unchanged. Now
$\mathbb E\,S_n=n(1-2\varepsilon)$, and we consider the centered walk
$X_n=S_n-n(1-2\varepsilon)=-2(N_n-n\varepsilon)$, with $N_n$ the number of visits
to $[0,\varepsilon)$, and the Fano factor of these counts,
$F=\Var(N_n)/\mathbb E\,N_n=\Var(S_n)/(4\varepsilon n)$
(Poisson: $F=1$; Bernoulli: $F=1-\varepsilon$).

\paragraph{Generic $\beta$.} For small $\varepsilon$ the kicks are the
returns of the quadratic sequence $q_0+np_0+c_+n(n+1)/2$ to a short interval,
whose statistics are Poissonian for generic
coefficients~\cite{RudnickSarnak1998,Marklof2003}. We consider two generic
pairs, A $=(\sqrt2-1,\sqrt3-1)$ and B $=((\sqrt5-1)/2,\sqrt3-1)$. Numerically
$F$ stays within $0.96$--$1.12$ (A) and $0.96$--$1.22$ (B) for
$\varepsilon<0.04$ [Fig.~\ref{fig:fano}(a),(b)], and the $p$-modes follow the
Poisson prediction $\exp[-n\varepsilon(1-\cos4\pi l\alpha)]$ for
$\varepsilon\le0.03$ (not shown). At larger
$\varepsilon$ the Fano factor is strongly non-monotone and reproducible
between independent ensembles (Fig.~\ref{fig:fano}): for pair A it peaks at
$F=3.18$ at $\varepsilon=0.29$, and the diffusion coefficient
$D=4\varepsilon F$ peaks at $4.09$ at $\varepsilon=0.374$ [panel (c)]. The
triangle map with the same parameters ($\varepsilon=1/2$) is thus far from the
most diffusive member of the family: for pair A its $D=1.19$ (against $3.09$
for the benchmark pair of Sec.~\ref{sec:generic}) is close to the Bernoulli
value $1$ and $3.4$ times smaller than the maximum. At $\varepsilon=1/4$ the variance saturates,
$\Var(X_n)/n=2.62\pm0.01$ from $n\simeq2^{13}$ to $2^{24}$.

\begin{figure}[t]
  \centering
  \includegraphics[width=\textwidth]{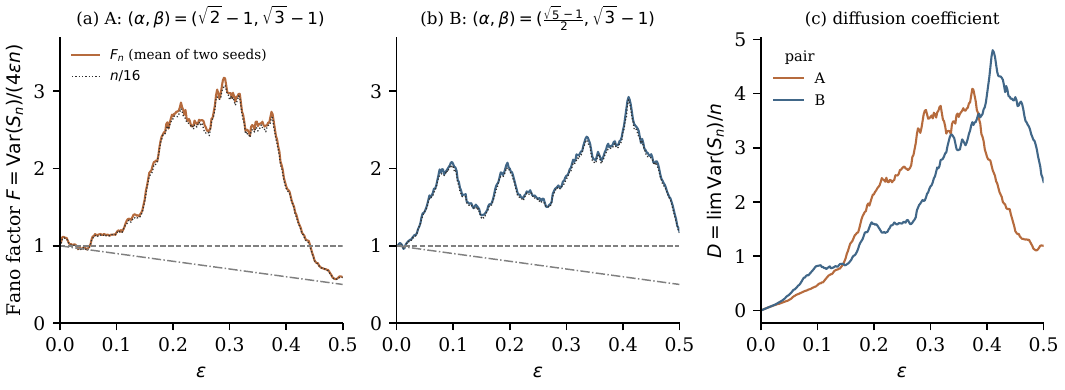}
  \caption{\textbf{Fano factor of the deformed map} at generic $\beta$,
  $250$ values of $\varepsilon$, $n=2^{16}$, two independent ensembles of
  $32\,768$ orbits per pair. (a), (b) $F=\Var(S_n)/(4\varepsilon n)$ for pairs
  A and B: mean of the two ensembles (band: their spread; dotted: $F$ at
  $n/16$); dashed, the Poisson value $1$; dash-dotted, the Bernoulli value
  $1-\varepsilon$. (c) $D=4\varepsilon F$.}
  \label{fig:fano}
\end{figure}

\paragraph{$\beta=0$: ballistic stays and a logarithmic term.} At $\beta=0$
and $\varepsilon\ne1/2$ the variance per step grows. At $\varepsilon=1/4$
($10^5$ orbits to $2^{24}$, two seeds) the raw variance is dominated by a
few orbits (one orbit carries up to $18\%$ of $\sum X_n^2$, Hill index of
$|X_n|$ about $2$), while the interquartile scale is reproducible and grows
linearly in $\ln n$ [Fig.~\ref{fig:asymmetric}(a)]. Each extreme orbit sits,
for up to $\gtrsim2.5\times10^6$ steps, on a periodic stretch of period two
($+-$) or four ($++--$), i.e.\ on a zero-sum-word resonance. During such a
stay $S_n$ does not move while the centring term does, so $X_n$ changes by
$-(1-2\varepsilon)$ per step: a ballistic excursion. At $\varepsilon=1/2$ the
same stays do not displace the walk.

The stays obey the law~\eqref{eq:trap-law}, with $K_w$ computed with the
threshold $\varepsilon$ and no free parameter [Fig.~\ref{fig:asymmetric}(b)]:
in $20\,000$ orbits of length $2^{24}$ ($\alpha=\sqrt2-1$),
\begin{center}
\begin{tabular}{@{}lccc@{}}
\toprule
 & $\varepsilon=0.1$ & $\varepsilon=0.2$ & $\varepsilon=0.25$\\
\midrule
$K_2$ (theory / measured) & $0.0800\ /\ 0.080$ & $0.3200\ /\ 0.318$--$0.320$ & $0.4216\ /\ 0.424$\\
$K_4$ (theory / measured) & $0.0800\ /\ 0.078$--$0.079$ & $0.3200\ /\ 0.322$ & $0.5000\ /\ 0.499$--$0.501$\\
\bottomrule
\end{tabular}
\end{center}
The number of stays of length at least $\ell$ per step is then
$K/(2\ell^2)$, and a stay of length $\ell$ adds $(1-2\varepsilon)^2\ell^2$ to
$X^2$: the second moment is marginal (hence the Hill index $\simeq2$ above),
and
\begin{equation}
  \frac{\Var(X_n)}{n}\simeq D_0+c(\varepsilon)\ln n,\qquad
  c(\varepsilon)=(1-2\varepsilon)^2K_{\rm tot},\quad K_{\rm tot}=\sum_wK_w .
  \label{eq:clog}
\end{equation}
For $\alpha=\sqrt2-1$ and $\varepsilon\le0.2$ the enumeration finds only
$m=2,4$, with
$K_2=K_4=8\varepsilon^2$, so $c(\varepsilon)=16\varepsilon^2(1-2\varepsilon)^2$.
Above $\varepsilon\simeq0.2$ longer words appear, and an exact enumeration of
the primitive zero-sum words up to length $60$ gives the converged values
$K_{\rm tot}=0.976$ ($c=0.244$) at $\varepsilon=1/4$ and $1.689$ ($c=0.270$)
at $\varepsilon=0.3$; for $\varepsilon\ge0.325$ the sum is not converged at
length $60$ (supplementary material, Sec.~\ref{sm:deformation}).

To test Eq.~\eqref{eq:clog} without the heavy-tailed variance, we split
$X_n=X_n^{\rm rest}+X_n^{\rm stay}$ along each orbit, $X^{\rm stay}$ being the
displacement accumulated inside the stays. The centered
$\Var(X^{\rm rest}_n)/n$ is flat over sixteen octaves (slopes in $\ln n$ of
$0.001\pm0.001$ and $0.004\pm0.002$ at $\varepsilon=0.1,0.2$; errors are
bootstrap over orbits here and below), so the whole
logarithmic growth is in the stays [Fig.~\ref{fig:asymmetric}(c)], whose
slope for $n\le2^{18}$ is $0.064\pm0.027$ and $0.215\pm0.079$ against
$0.102$ and $0.230$ from Eq.~\eqref{eq:clog}. At larger $n$ the stays that
matter have length comparable to $n$ and are too rare to sample. With stays
of all even periods up to $14$, $X^{\rm rest}$ is flat also at
$\varepsilon=0.25$ and $0.3$, but not at $0.35$, where longer words matter;
the stay slopes lie systematically below Eq.~\eqref{eq:clog} ($0.15\pm0.05$
against $0.27$ at $\varepsilon=0.3$), so Eq.~\eqref{eq:clog} is tested only to
$\sim40\%$ beyond $\varepsilon=0.25$.

Finally, the deformation moves the zero mean drift to
$\beta=-\alpha(1-2\varepsilon)$. A stay keeps $S$ on the slope $s_m/m$ of its
word; if this equals the mean slope $1-2\varepsilon$, the stay does not
displace $X$, otherwise it does. At $\varepsilon=1/4$ on the zero-drift line
the diffusion is suppressed [$\Var(X_n)/n$ falls from $0.46$ at
$n=2^8$ to $0.2388\pm0.0011$ at $2^{24}$, Fig.~\ref{fig:asymmetric}(a)], whereas for
irrational $\varepsilon=(\sqrt3-1)/3$, where no word has the right slope, it
does not seem to be [a single run gave $\Var(X_n)/n=1.47$ on the line against
$1.46$ at a distance $10^{-3}$ from it]. The triangle map at $\beta=0$ is the case in which the two features
coincide.

\begin{figure}[t]
  \centering
  \includegraphics[width=\textwidth]{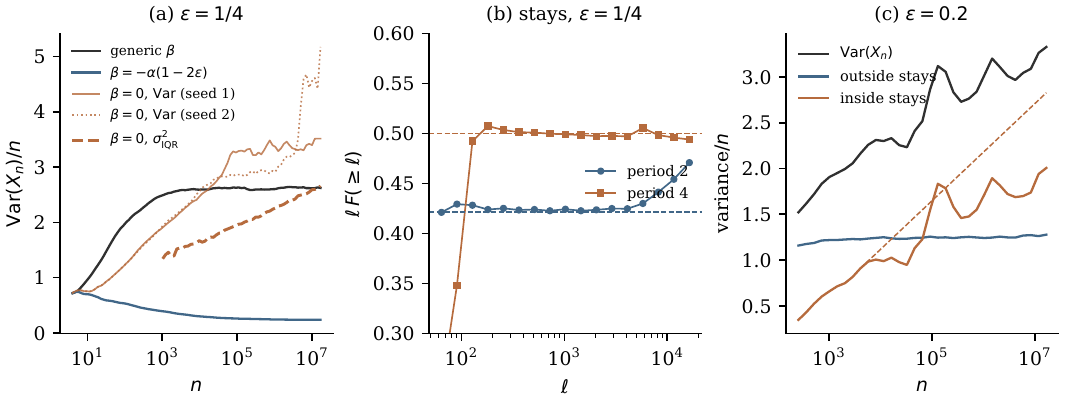}
  \caption{\textbf{Asymmetric deformation, $\alpha=\sqrt2-1$.}
  (a) $\varepsilon=1/4$: $\Var(X_n)/n$ for generic $\beta=\sqrt3-1$, for
  $\beta=-\alpha(1-2\varepsilon)$ (zero mean drift), and for $\beta=0$ (two
  seeds of $10^5$ orbits; interquartile scale $\sigma_{\rm IQR}^2/n$ from
  $n=2^{10}$). (b) $\varepsilon=1/4$, $\beta=0$: $\ell F(\ge\ell)$ for stays
  of period two and four, with the exact $K_2=0.4216$ and $K_4=0.5$ (dashed).
  (c) $\varepsilon=0.2$, $\beta=0$: $\Var(X_n)/n$ split into the parts
  accumulated outside and inside the stays; dashed, the slope
  $c(\varepsilon)=0.230$ of Eq.~\eqref{eq:clog} anchored at $n=2^{12}$.}
  \label{fig:asymmetric}
\end{figure}

\section{Discussion and conjectures}
\label{sec:discussion}

Random-walk scaling does not identify a Bernoulli process, and decorrelation
does not identify diffusion. In the generic triangle map the exponent of the
variance, the Gaussian fluctuations and the first-passage exponent are those
of a coin, but the diffusion coefficient is three times larger and is set by
the order of bounded waiting times, and the persistence amplitude is not
determined by the marginals. At $\beta=0$ the same type of correlations
cancels the Green--Kubo sum, while the walk keeps spreading slowly. The
zero-sum-word resonances, pure twists at $\beta=0$, control the quantities
we can compute: the trapping law, the crossover time $\beta^{-1/2}$ and, in
the deformed map, the logarithmic term~\eqref{eq:clog}. We collect the
picture in three conjectures, in decreasing order of evidence.

\begin{conjecture}[generic map]\label{conj:generic}
If $1$, $\alpha$, $\beta$ are rationally independent, $T_{\alpha,\beta}$ is
uniquely ergodic and mixing, the Green--Kubo series converges to
$D=G_\infty>0$, and, for Lebesgue-distributed initial conditions,
$S_n/\sqrt n$ converges in law to the centered Gaussian of variance $D$.
\end{conjecture}

The evidence is the absence of periodic orbits and the bound on all word
repetitions under rational independence (Sec.~\ref{sec:rigorous}), the
$N^{-1/2}$ decay of worst-case Birkhoff averages, the decay of the Fourier
correlations, and the statistics of Sec.~\ref{sec:generic}: Kolmogorov
distance from the Gaussian at the sampling-noise level for
$10^3\le n\le3.4\times10^7$, running maximum with the Brownian amplitude
$\sqrt{2D/\pi}$ and persistence exponent $1/2$, which also support the
functional version (convergence of $S_{\lfloor nt\rfloor}/\sqrt{Dn}$ to
Brownian motion). The
counterexample of Sec.~\ref{sec:rigorous} shows that rational independence
cannot be weakened to the condition of Ref.~\cite{HorvatEtAl2009}.

\begin{conjecture}[intermediate cohomology at $\beta=0$]\label{conj:cohomology}
For irrational $\alpha$ and $\beta=0$, $T^2$ is ergodic. Consequently, by
Corollary~\ref{cor:coboundary}, $X$ is not a measurable coboundary, the laws
of $S_n$ are not tight, and the spectral measure of $X$ has no atom at $\pi$,
although $\Var(S_n)=o(n)$.
\end{conjecture}

The symbol $X$ would then sit strictly between the two familiar classes of
observables. A coboundary $X=g-g\circ T$ has $S_n=g-g\circ T^n$, whose laws
are tight, and a spectral density vanishing at $\omega=0$; an observable with
$D>0$ has $\Var(S_n)\sim Dn$. At $\beta=0$, $X$ would share $D=0$ with the
first class and unbounded fluctuations with the second: $S_n$ spreads without
limit, but more slowly than $\sqrt n$. The data are consistent: the bulk of the
law of $S_n$ widens, $\Var(S_n)$ shows no plateau up to $10^8$, the
characteristic function~\eqref{eq:char-function} decays, and
$\Var(A_n)/n^2$ decreases, although slowly ($\propto n^{-0.2}$, the same
effective exponent as the Fourier correlations), and the transfer functions of
the exact Markov approximations stay small, growing only as $\ln J$
(Sec.~\ref{sec:beta0}). Whether $S_n/\sigma_n$ has
a non-Gaussian limit, as its stable shape suggests, is a further question.

\begin{conjecture}[resonance renormalization]\label{conj:renormalization}
At $\beta=0$ the growth law of $\Var(S_n)$ is determined by the hierarchy of
zero-sum-word twists, through a renormalization relating the resonances of
words of length $m$ to those of longer words.
\end{conjecture}

Every zero-sum word is a twist at $\beta=0$, resonances exist at all lengths,
a small $\beta$ destroys them on the time scale at which the variance crosses
over to diffusion, and the resonant momenta $(j-\alpha W)/m$ exist for every
$\alpha$, which would explain the insensitivity to the arithmetic of
$\alpha$. In the deformed map the hierarchy provably sets the logarithmic
coefficient through $K_{\rm tot}$; at $\varepsilon=1/2$ the stays carry no
displacement, and the growth must come from the walk between them, which is
the open case. A derivation along these lines would replace the choice among
$\ln^2n$, $(\ln n)^p$ and $n^\gamma$, which the data cannot make.
Renormalization schemes of this kind are established for interval exchange
maps~\cite{Rauzy1979,Veech1982}, for piecewise rotations and polygon
exchanges~\cite{KouptsovLowensteinVivaldi2002,LowensteinKouptsovVivaldi2004,Hooper2013}
and, for the quadratic phase of the lift, for theta sums and Heisenberg
nilflows~\cite{Marklof1999,FlaminioForni2006}; none applies directly here,
because the cylinders of the triangle map are polygons cut by the
discontinuity lines and the block maps are twists rather than translations.

Other questions remain open: the dependence of $D$ on $(\alpha,\beta)$ and the
origin of the maxima of $F(\varepsilon)$; the convergence of $K_{\rm tot}$ for
$\varepsilon>0.3$; and a theory of persistence at $\beta=0$, which the
resonances do not explain, since long-surviving orbits are almost never
repetitions of a periodic word. A further direction is to use observables
adapted to the finite-time continuity domains and their successive cuts,
rather than global Fourier modes, for which the recurrence analysis of the
cuts in Ref.~\cite{DegliEspostiGalatolo2005} is a natural starting point.

\section*{Supplementary Material}
The supplementary material contains a table of what each diagnostic can and
cannot establish, the word statistics and the twist geometry at $\beta=0$,
further data for the generic regime (correlations, other parameter pairs and
the map of $D(\alpha,\beta)$), the $\beta=0$ variance fits and Markov
approximations, the worst-case Birkhoff averages, details of the asymmetric
deformation, and the phase-space coverage tests.

\section*{Author Declarations}
\subsection*{Conflict of Interest}
The author has no conflicts to disclose.

\subsection*{Author Contributions}
\textbf{Mirko Degli Esposti}: all roles (Conceptualization, Methodology,
Software, Formal analysis, Investigation, Validation, Visualization,
Writing -- original draft, Writing -- review \& editing).

\section*{Data availability}
The data that support the findings of this study and the code used to
generate them are openly available in Zenodo at
\url{https://doi.org/10.5281/zenodo.23623347}~\cite{DegliEsposti2026code}.

\appendix

\section{Numerical methods}
\label{app:methods}

\paragraph{Integrator.} The integers $S_n$ and $W_n$ of
Eq.~\eqref{eq:exact-phase} are updated exactly; $q_n$ is recomputed at every
step from $q_0$, $p_0$ (double precision) and $\alpha$, $\beta$ (double-double,
$\sim10^{-32}$), with an absolute error of order $10^{-16}$ that does not
accumulate, so $X_n$ can be wrong only if $q_n$ falls within this distance of
a cut. The construction requires $|W_n|$, $n(n+1)/2<2^{53}$, i.e.\
$n\lesssim10^8$. Symbol sequences coincide with $60$-digit \texttt{mpmath}
orbits over $2\times10^4$ steps, at $\beta=0$ and for generic parameters, and the cost is $1.3$ times that of double-precision iteration.

\paragraph{Polygonal cylinders.} For a fixed word, $q_k$ is affine in
$(q_0,p_0)$, so each cylinder is a finite union of convex polygons. The
partition into cylinders of length $n$ is built by clipping with the lines
$q_0+kp_0=\text{const}$, $k<n$ (pieces below $10^{-16}$ in area dropped; lost mass
$<10^{-13}$). Then $\Var(S_n)=\sum_w|[w]|\,s_n(w)^2$, with $|[w]|$ the area of $[w]$, is a finite sum of
polygon areas, and stationarity gives
$C(k)=[\Var(S_{k+1})-2\Var(S_k)+\Var(S_{k-1})]/2$. The number of pieces grows as $\simeq0.7n^3$, of the order of
the cubic growth reported in Ref.~\cite{HorvatEtAl2009}, which limits exact
computations to $n\lesssim100$ (to $n=256$ for the survival probability $Q_+(n)$, for which only the
cylinders with $S_k>0$ for all $k$ are refined). Word
probabilities, $\Var(S_n)$, $C(k)$ and $Q_+(n)$ obtained in this way agree
with Monte Carlo estimates. The trap coefficients $K_w$ are computed exactly
for all primitive zero-sum words of length $m\le60$ by a depth-first search
over intervals of $q$ on each resonant line $p=(j-\alpha W)/m$.

\paragraph{Correlations and spectra.} Correlations are estimated per
trajectory with the exact mean $0$, by FFT on zero-padded segments, and
averaged over independent trajectories. Spectra are averaged periodograms of
Hann-windowed segments of lengths $2^{15}$ to $2^{24}$ with $50\%$ overlap,
binned logarithmically and combined across segment lengths; the lowest
frequency is $\omega\simeq3\times10^{-6}$.

\paragraph{Statistics.} The statistical unit is always an independent
initial condition drawn from Lebesgue measure. Errors on variances are
bootstrap or batch standard deviations over independent orbits; survival probabilities carry
Wilson $68\%$ intervals; effective exponents are computed over windows of
fixed ratio. Replicate ensembles with independent seeds agree within errors, except for
the heavy-tailed raw variances of Sec.~\ref{sec:asymmetric}, for which we
report robust scales.
The same pipeline applied to Bernoulli sequences reproduces $D=1$ and
$S_X(0)=1$ within $4\%$ (supplementary material, Sec.~\ref{sm:generic}).

\paragraph{Use of AI tools.} The simulation and analysis code was written
and refactored with the assistance of Claude Opus~5.5 and Claude Sonnet~5.5
(Anthropic), used through the Claude Code agent, to speed up code
development and to obtain independent critical readings of the drafts. The
same tools proposed
numerical tests, drafted and revised passages of the manuscript, produced
simulated referee reports on earlier drafts, assisted with the literature
search, and checked quantitative statements in the text against the
project's own logs. The author reviewed and edited all content, checked the
mathematical arguments, and takes full responsibility for the article.
Every bibliographic entry was checked against the publisher's or a
bibliographic database record, and every numerical result reported here is
produced by the deposited code~\cite{DegliEsposti2026code}.

\bibliographystyle{unsrt}
\bibliography{references}

\ifarxiv
\clearpage
\begin{center}
{\Large\bfseries Supplementary Material}
\end{center}

\setcounter{section}{0}
\setcounter{figure}{0}
\setcounter{table}{0}
\setcounter{equation}{0}
\renewcommand{\thesection}{S\arabic{section}}
\renewcommand{\thefigure}{S\arabic{figure}}
\renewcommand{\thetable}{S\arabic{table}}
\renewcommand{\theequation}{S\arabic{equation}}

Experiment numbers refer to the scripts \texttt{experiments/NN\_*.py} of the
code archive (see Data availability); they follow the chronological research
log, so not every number appears here.
Table~\ref{tab:diagnostics} summarizes what each diagnostic of the main text
can and cannot establish.

\begin{table}[ht]
\centering
\small
\begin{tabular}{p{0.24\textwidth}p{0.33\textwidth}p{0.33\textwidth}}
\toprule
Diagnostic & What it measures & What it cannot establish\\
\midrule
$\Var(S_n)$, $C(k)$, $G_K$, $S_X(\omega)$ & second-order structure; $D$ and finite-size corrections on the explored scales & asymptotic limits; any non-Gaussian or higher-order property\\
CLT diagnostics & one-time distribution of $S_n$, two-time increments & a functional limit theorem\\
First passages $Q_\pm(n)$ & persistence, a multi-time property & a limiting exponent beyond the explored window\\
Inter-event intervals & distribution versus order of the waiting times & the asymptotics of the order effect\\
Exact cylinders, Markov hierarchy & word probabilities without sampling; finite-order models & the asymptotic behavior of the map\\
Fourier modes, worst-case averages & decay of correlations; uniformity of Birkhoff averages & mixing or unique ergodicity\\
\bottomrule
\end{tabular}
\caption{Diagnostics used in this work and their scope. Variance,
correlations and spectrum are three estimators of the same second-order
structure; their agreement is a consistency check rather than independent
evidence.}
\label{tab:diagnostics}
\end{table}

\section{Words, runs and exact probabilities}
\label{sm:words}

\paragraph{Run lengths near rational drifts (Experiment 21).} The bound of
Lemma~\ref{lem:runs} is $L_s=L(c_s)$, $c_s=\beta+s\alpha$, where $L(c)$ is the
longest run of the skew shift $F_c$ inside a half-torus; it is the same for
both halves, since $q\mapsto q+1/2$ commutes with $F_c$ and exchanges them, and
$L(c)=L(-c)=L(1-c)$ by the involution $(q,p)\mapsto(-q,-p)$. Hence
$L_+(\alpha,\beta)=L_-(\alpha,-\beta)$, and $L_+=L_-$ at $\beta=0$, but not in
general. Figure~\ref{fig:runs} shows $L(c)$, computed exactly from the
polygonal run cylinders for $20\,000$ values of $c$. It is piecewise constant
and small ($6$--$7$ on most of $[0.1,0.3]$) and diverges only at the
rationals $a/b$ for which $F_{a/b}$ has a periodic orbit contained in one
half-torus: among those of small denominator, $c=0$, $1/3$, $2/5$ and $1/2$,
but not $1/5$, $1/4$, $2/7$ or $3/7$. Near them $L\propto|c-a/b|^{-1/2}$
(local exponents $0.41$--$0.49$ on $10^{-4}\le|c-a/b|\le10^{-2}$), which for
$c\to0$ reflects $q_k-q_0\simeq kp_0+ck^2/2$. In the $(\alpha,\beta)$ plane
the resonances form the two families of diagonals $\beta\pm\alpha=a/b$. For
the generic benchmark, $c_+=\beta+\alpha\simeq0.986$ is close to $c=0$ and
$L_+=17$, whereas $c_-\simeq0.250$ gives $L_-=7$. Irrationality of $\alpha$,
$\beta$ with $\beta\notin\mathbb Z+\alpha\mathbb Z$ does not make both $c_\pm$
irrational: $\beta=\tfrac12-\alpha$ gives $c_+=\tfrac12$.

\begin{figure}[ht]
 \centering
 \includegraphics[width=\textwidth]{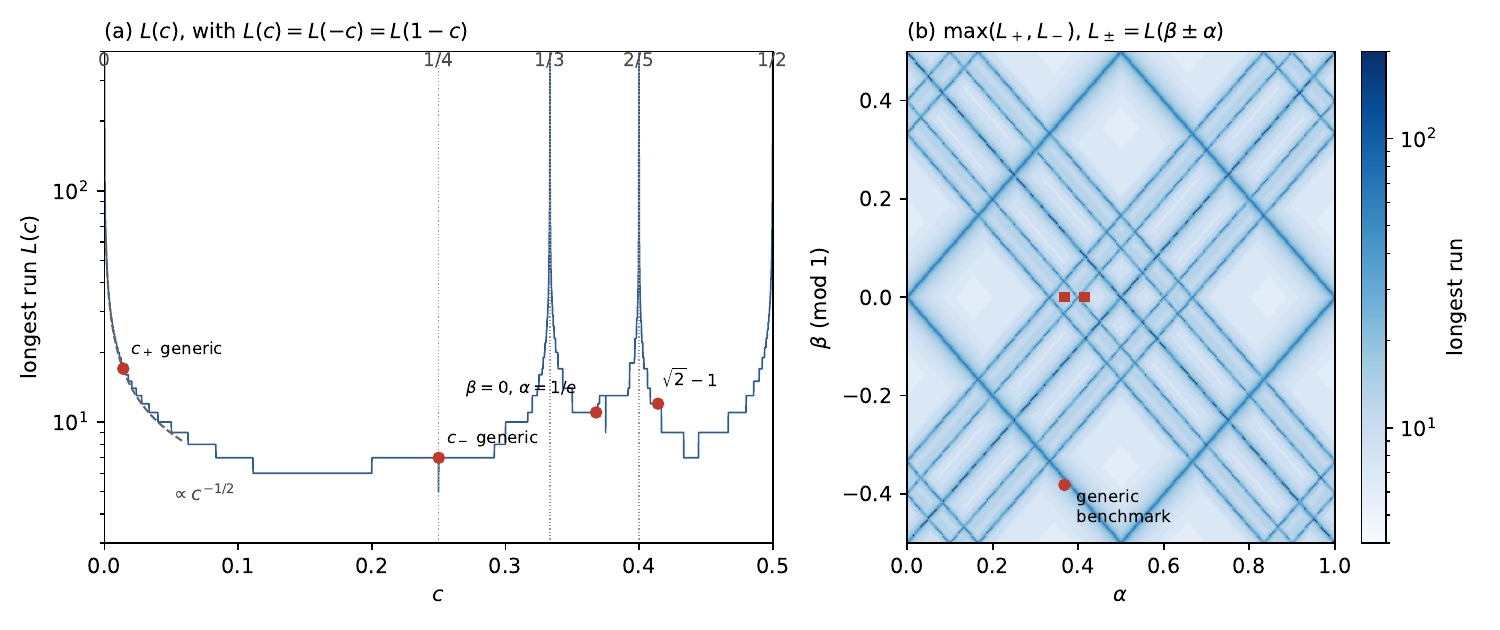}
 \caption{\textbf{The run bound of Lemma~\ref{lem:runs}} (Experiment 21).
 (a) Longest run $L(c)$ of the skew shift $F_c$ in a half-torus, computed
 exactly for $20\,000$ values of $c\in(0,1/2]$; dotted lines mark rationals,
 dashed the law $c^{-1/2}$ near $c=0$. Dots: $c_\pm$ of the generic benchmark
 and $c=\alpha$ for $\beta=0$ with $\alpha=1/e$, $\sqrt2-1$. (b)
 $\max(L_+,L_-)$ with $L_\pm=L(\beta\pm\alpha)$ in the $(\alpha,\beta)$ plane;
 the dot is the generic benchmark ($\beta\equiv g-1$), the squares the two
 $\beta=0$ cases.}
 \label{fig:runs}
\end{figure}

\paragraph{The period-two twist and its traps.} For $w=+-$ at $\beta=0$, a
point with $q\in[0,1/2)$ and $q+p+\alpha\in[1/2,1)\pmod1$ is mapped by $T^2$ to
$(q+2p+\alpha,\,p)$: on the cylinder $[+-]$ the block map is a pure twist,
the identity on the lines $2p+\alpha\in\mathbb Z$. Imposing the symbols gives
the two segments of Fig.~\ref{fig:twist}(a), $p=1-\alpha/2$,
$q\in[\tfrac12-\tfrac\alpha2,\tfrac12)$ and $p=\tfrac12-\tfrac\alpha2$,
$q\in[0,\tfrac12-\tfrac\alpha2)$, together with their images under one step,
which carry the rotation $-+$. At momentum distance $\delta$ the block map is
the translation $q\mapsto q+2\delta$, so an orbit slides along the segment
and repeats $+-$ for the time it needs to cross it, $L/|\delta|$ steps
[Fig.~\ref{fig:twist}(b): $1415$, $567$, $944$ steps against $1414$, $566$,
$943$]. For $\beta>0$ the block map adds $2\beta$ to the momentum, so
$\delta$ grows linearly in time and the motion along the segment becomes a
parabola; the longest repetition is the one whose parabola, of depth
$\beta n^2/8$ after $n$ steps, just fits in the longer segment, of length
$L_{\max}=\max(\alpha,1-\alpha)/2$, i.e.\ $n=2\sqrt{2L_{\max}/\beta}$
[Fig.~\ref{fig:twist}(c): $483$ steps against $484$ at $\beta=10^{-5}$].

\begin{figure}[ht]
 \centering
 \includegraphics[width=\textwidth]{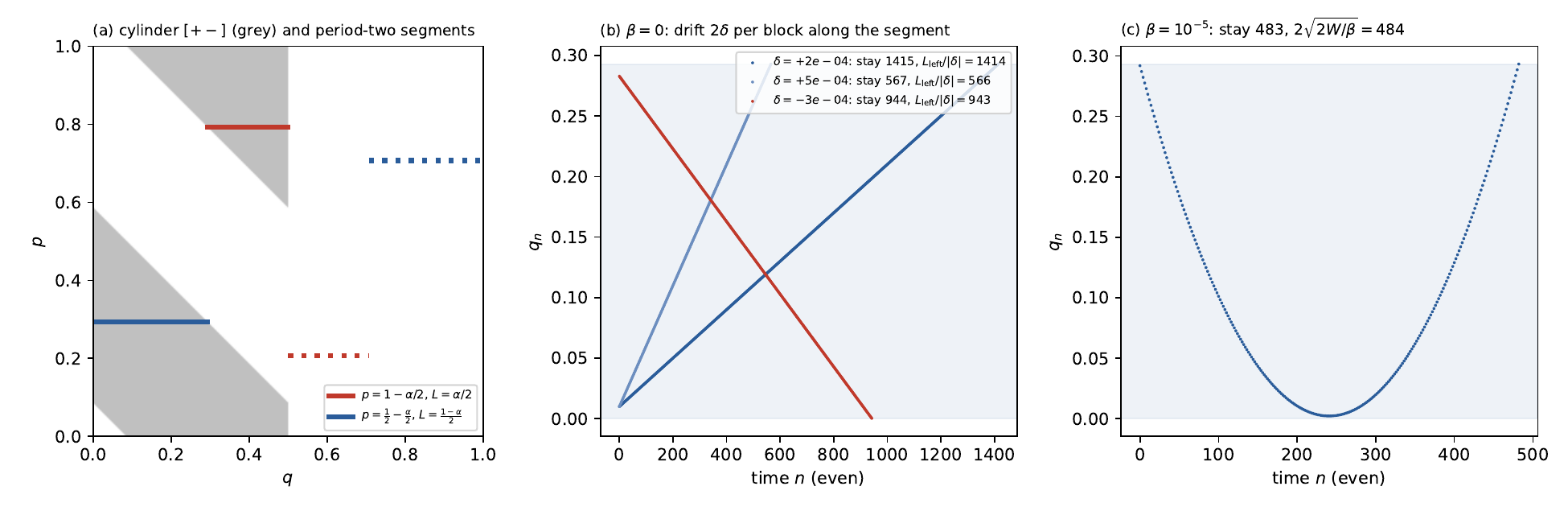}
 \caption{\textbf{The period-two twist at $\beta=0$}, $\alpha=\sqrt2-1$.
 (a) The cylinder $[+-]$ (grey), the period-two segments (solid) and their
 images under one step (dotted). (b) $q_n$ at even $n$ for orbits at momentum
 distance $\delta$ from the segment $p=\tfrac12-\tfrac\alpha2$ (shaded): they
 drift by $2\delta$ per block and repeat $+-$ until they leave it; the stay
 length is compared with the remaining length of the segment divided by
 $|\delta|$. (c) $\beta=10^{-5}$: the momentum drift bends the motion into a
 parabola; the longest stay fits it into the wider segment.}
 \label{fig:twist}
\end{figure}

\paragraph{$P(-+-+)=1/12$ at $\beta=0$.} With $r=q_0$ and $x=p_0-\alpha$
(independent and uniform),
$P(-+-+)=\int_0^1\!dx\int_0^1\!dr\,\mathbf 1_{A_x}(r)\mathbf 1_{A_x}(r+2x+\alpha)$,
where $A_x$ is an arc of length $\|x\|$. Its Fourier coefficients in $\alpha$
are proportional to $\int_0^1\sin^2(\pi k\|x\|)e^{4\pi ikx}dx=0$ for every
integer $k\ne0$, so $P(-+-+)=\int_0^1\|x\|^2dx=1/12$ for every $\alpha$. The
identity was first found numerically, to $10^{-10}$, in the exact cylinder
computations of Sec.~\ref{sm:markov}; it fails for $\beta\ne0$.

\paragraph{Word complexity (Experiment 20).} From the exact cylinder
partition we count, for $n\le110$, the distinct admissible words $p(n)$ (a
cylinder made of several polygons is counted once), the number of polygons,
and the admissible zero-sum words $z(n)$, i.e.\ those with $S_n=0$
(Table~\ref{tab:complexity}). All words are admissible up to $n=5$ (generic)
and $n=6$ ($\beta=0$). The first difference satisfies $s(n)/(3n^2)=0.661$,
$0.704$, $0.708$ at $n=110$, in agreement with $p/n^3$: the growth is cubic,
with coefficient $\simeq0.66$ (generic) and $\simeq0.70$--$0.71$ ($\beta=0$,
still rising slowly). The second difference $b(n)$ gives $b/(6n)$
fluctuating between $0.50$ and $0.93$, and cubic fits on $n\ge30,50,70$ give
$0.57$--$0.74$, so neither the cubic coefficient nor the lower-order terms are
determined. The number of polygons exceeds $p(n)$ by less than $0.03\%$:
nearly every cylinder is connected. The topological entropy vanishes. At
$\beta=0$ the zero-sum words are a constant fraction $\simeq23\%$ of all
words, $z(n)\propto n^{3.0}$, against $z(n)\propto n^{2.1}$ and a falling
fraction ($4\%$ at $n=110$) in the generic case, where the spread $\sqrt n$
of $S_n$ alone would give $n^{2.5}$. This is consistent with the zero-sum
twists of Sec.~\ref{sec:bounded-runs}, but we have no derivation.

\begin{table}[ht]
\centering
\caption{Word complexity at $n=110$ (Experiment 20). $p$: distinct words;
$z$: distinct words with $S_n=0$.}
\label{tab:complexity}
\begin{tabular}{lrrrrr}
\toprule
case & $p$ & polygons & $p/n^3$ & $z$ & $z/p$\\
\midrule
generic ($\alpha=1/e$, $\beta=(\sqrt5-1)/2$) & 881\,378 & 881\,620 & 0.662 & 34\,356 & 0.039\\
$\beta=0$, $\alpha=\sqrt2-1$ & 938\,742 & 938\,942 & 0.705 & 215\,138 & 0.229\\
$\beta=0$, $\alpha=1/e$ & 943\,784 & 943\,836 & 0.709 & 217\,263 & 0.230\\
\bottomrule
\end{tabular}
\end{table}

\section{Generic regime: further data}
\label{sm:generic}

\paragraph{Long runs.} With $4096$ initial conditions to $n=10^8$ the
late-time exponents are $\gamma_{\rm var}=0.9957$ and $\gamma_{\max}=0.5024$,
with $\Var(S_n)/n=3.07854$ and $\langle S_{\max}(n)\rangle/\sqrt n=1.4077$ at
the largest checkpoint; the run to $n=10^7$ with $2048$ orbits gave $1.0086$,
$0.5062$ and $3.10155$. The ensemble variance of a few thousand orbits has a
relative error of about $3\%$ at every $n$, whereas long-trajectory
correlations determine $D_n$ for $n\lesssim10^4$ ten to forty times more
precisely. An earlier truncated Green--Kubo value $3.425$ at
$K=2\times10^5$, from eight double-precision trajectories, is not confirmed
($G_{1.3\times10^5}=3.08\pm0.04$). Figure~\ref{fig:correlations} shows the
individual correlations: the measured $C(k)$ coincide with the exact cylinder
values for $k<80$, with deviations distributed as statistical noise, and the
RMS correlation falls by two decades across the band
$10^2\lesssim k\lesssim2\times10^3$ before reaching the noise level
$\simeq2\times10^{-5}$.

\begin{figure}[t]
 \centering
 \includegraphics[width=\textwidth]{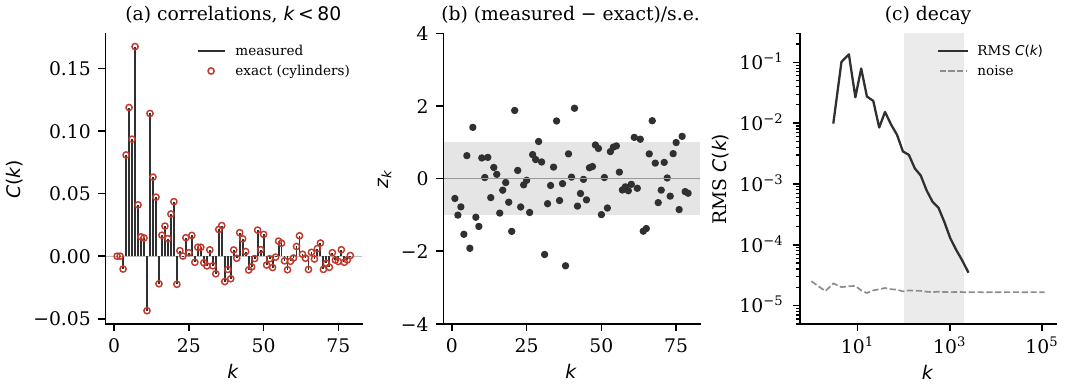}
 \caption{\textbf{Symbol correlations, generic regime} ($\alpha=1/e$,
 $\beta=(\sqrt5-1)/2$; $64$ exact-phase trajectories of length $2^{26}$).
 (a) $C(k)$ for $k<80$: measured (bars) and computed without sampling from
 the polygonal cylinders (circles); $C(1)=C(2)=0$. (b) Deviations in units
 of the standard error. (c) RMS of $C(k)$ in logarithmic bins of $k$ with the
 noise level subtracted (solid) and the noise level (dashed); grey: the band
 $10^2\le k\le2\times10^3$.}
 \label{fig:correlations}
\end{figure}

\paragraph{Finite-size corrections.} The first moment $M=\sum_kkC(k)$
converges on the accessible scales ($M\simeq95$--$100$ for
$2\times10^3\le K\le10^4$), so $D_n=D-2M/n+o(1/n)$ reaches $D$ to $1\%$ only
for $n\gtrsim2\times10^4$. A non-oscillating tail $C(k)\simeq ak^{-3/2}$ would
give $D_n-D\propto n^{-1/2}$, $G_K-D\propto K^{-1/2}$ and a cusp
$S_X(\omega)-D\propto\sqrt{|\omega|}$. A fit restricted to $k<10^3$ suggests
$a\approx0.6$--$1$, but over $2\times10^3\le k\le10^4$ the data give
$a=0.08\pm0.16$. The $k^{-3/2}$ decay reported for the smooth observable
$\cos\pi y$~\cite{CasatiProsen2000} therefore does not carry over to
$\thetafun$ on these scales. The periodogram decreases roughly linearly in
$\sqrt\omega$ for $10^{-3}\lesssim\omega\lesssim10^{-1}$, the image of the
correlation band at $k\sim10^2$--$10^3$, and is flat below.

\paragraph{Other generic pairs (Experiment 12).} For
$(\alpha,\beta)=(\sqrt2-1,g)$, $(1/e,\sqrt2-1)$, $(\pi-3,1/e)$ and
$(g,\sqrt3-1)$, $g=(\sqrt5-1)/2$, Green--Kubo sums from $32$ trajectories of
length $2^{24}$ give $D=2.19$, $2.94$, $2.15$ and $2.33$ (errors $\simeq0.03$;
the ensemble variance at $n=2^{20}$ agrees). The ratio of $D$ to the renewal
value $v/2$ is $2.4$--$3.7$.

\paragraph{Map of $D(\alpha,\beta)$ (Experiment 24).} Besides $\alpha\to\alpha+1$,
$\alpha\to-\alpha$ ($q\to q+\frac12$) and $\beta\to-\beta$ ($\sigma$), the identity
$\theta/2\equiv\frac12\pmod1$ gives $T_{\alpha+1/2,\beta}=T_{\alpha,\beta+1/2}$, hence
$D(\alpha,\beta)=D(\frac12-\alpha,\frac12-\beta)$: the fundamental domain is the
triangle $\alpha,\beta\ge0$, $\alpha+\beta\le\frac12$. Fig.~\ref{fig:Dmap} shows
$D_n=\Var(S_n)/n$ at $n=2^{14}$ on a $200\times200$ grid with irrational offsets
($1024$ orbits per point, statistical error $\simeq4\%$); $D_n$ changes by
$\simeq5\%$ between $n=2^{10}$ and $2^{14}$, except at $\beta\simeq0$ and near
$\alpha=0,\frac12$ ($0.4\%$ of the points change by more than $50\%$). Away from the
edges ($\alpha,\beta\ge0.02$, $\alpha\le0.48$) $D$ ranges from $0.20$ to $14$
(1st--99th percentiles $0.7$--$6.8$, median $2.3$) and is below the Bernoulli
value on $6\%$ of the points; towards the skew-shift limit $\alpha\to0$ and near
$\beta=0$, $D_n$ is still growing at $n=2^{14}$ (up to $53$ at
$\alpha\simeq\beta\simeq10^{-3}$) and is not a converged value. Dark lenses, with $D$ well below $1$, sit at rational $\beta=j/m$ with
small $m$ and open from the skew-shift limit $\alpha=0$ [panel (b)]: there
zero-sum words whose length is a multiple of $m$ have $c_w\in\mathbb Z$, the
mechanism that gives $D=0$ at $\beta=0$. Bright bands follow the low-order
resonances $m\beta+s\alpha\in\mathbb Z$ with $s\ne0$ (e.g.\ $\beta=\alpha$ and
$\alpha+\beta=\frac12$) near the corners of the square but leave them in the
interior.

\begin{figure}[ht]
  \centering
  \includegraphics[width=\textwidth]{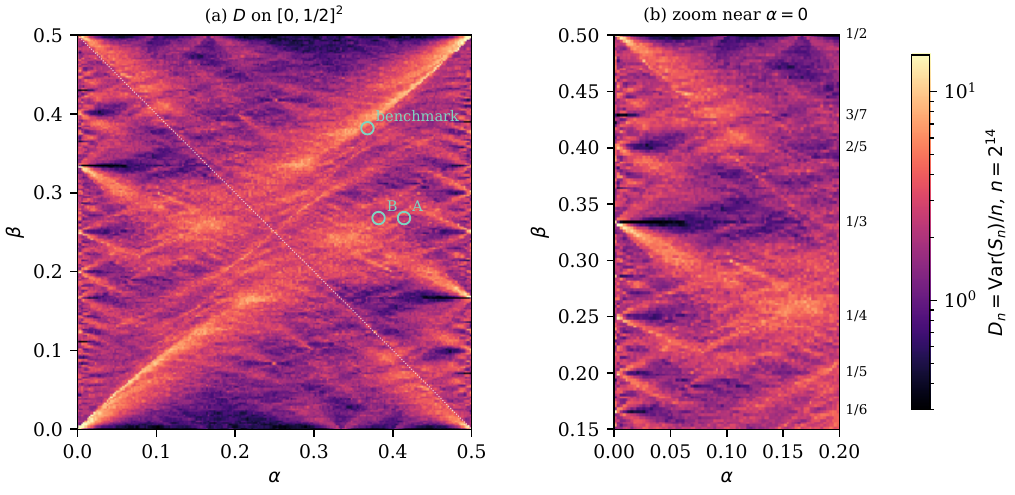}
  \caption{\textbf{Diffusion coefficient over the parameter square} (Experiment 24,
  exact lift). $D_n=\Var(S_n)/n$ at $n=2^{14}$, $1024$ orbits per point, on a
  $200\times200$ grid; computed for $\alpha+\beta\le\frac12$ and reflected through
  $(\frac14,\frac14)$ (dotted line), by the exact symmetry
  $D(\alpha,\beta)=D(\frac12-\alpha,\frac12-\beta)$. (a) The fundamental square
  $[0,\frac12]^2$, with the benchmark pair and the pairs A, B of
  Sec.~\ref{sec:asymmetric} folded into it. (b) Zoom near the skew-shift limit
  $\alpha=0$; labels mark rational $\beta$.}
  \label{fig:Dmap}
\end{figure}

\paragraph{Distribution of $S_n$.} The skewness of $S_n$ is $0.65$--$0.77$
for $n\le30$, $0.41$ at $n=10^2$ and $0.19$ at $n=10^3$, decaying roughly as
$n^{-1/2}$ and compatible with zero from $n\sim10^4$. The involution
$(q,p)\mapsto(-q,-p)$ maps $S_n$ for $(\alpha,\beta)$ to $-S_n$ for
$(\alpha,-\beta)$, so the sign of the skewness is reversed by
$\beta\mapsto-\beta$; for the benchmark $S_n$ is slightly more often
negative, consistent with $Q_+>Q_-$. The correlation between $S_n$ and
$S_{2n}-S_n$ is $0.07\pm0.02$ at $n=10^2$.

\paragraph{Bernoulli control.} The same finite-sample pipeline applied to
independent unbiased Bernoulli sequences, on the same scales, gives
$\Var(S_n)/n=0.9676$, $G_{200000}=0.9856$, and a low-frequency spectrum
$S_X(0)\simeq1.015$ with local exponent $\eta=0.0069$, consistent with $D=1$
and $S_X\equiv1$. With few trajectories the Green--Kubo partial sum is more
sensitive to accumulated noise than the spectral estimate.

\section{The $\beta=0$ regime: further data}
\label{sm:beta0}

\paragraph{Precision.} The earlier study of this case already used
extended precision~\cite{HorvatEtAl2009}. On the same initial conditions,
double-precision symbol sequences first differ from the exact ones at
$n\sim10^6$--$7\times10^6$ at $\beta=0$ (at $n\sim8\times10^4$ in the generic
regime). For most $\alpha$ the variances still agree within the sampling
noise, and a double-precision run with $4096$ orbits gives
$\Var(S_{10^7})=88.0$ and $\Var(S_{10^8})=112.6$, consistent with the exact
data; for $\alpha=[0;2,30,2,300,2,3000,1,1,\dots]$, however, the
double-precision variance is three times too large for $n\gtrsim5\times10^6$.

\paragraph{Growth law.} The six values of $\alpha$ and the replicate
ensembles give the same curves within errors (about $5\%$ at the largest
scale); $\alpha\simeq0.49194$ shows a short transient at $n\sim20$--$60$,
reflecting $\alpha\simeq1/2$, and no feature is visible at the convergent
denominators $33102$ and $74646$ of $\pi-3$. The models $a+b\ln^2n$,
$a+c\ln n+b\ln^2n$, $a+b(\ln n)^p$ and $a+cn^\gamma$ fit all ensembles equally
well for $n\ge10^2$, with $b\simeq0.2$--$0.35$, $p\simeq1.5$--$2.1$ and
$\gamma\simeq0.05$--$0.1$ depending on the model and on $\alpha$. They are
numerically degenerate over seven to eight decades
($n^\gamma=1+\gamma\ln n+\tfrac12\gamma^2\ln^2n+\dots$), the coefficient of
$\ln^2n$ is not a robust quantity, and the offset $a\simeq6$--$8.5$ differs
between independent ensembles. The logarithmic growth of the extremes
$S_{\max}(n)$ of single orbits reported for quadratic irrational $\alpha$ in
Ref.~\cite{HorvatEtAl2009} concerns a different quantity and is not in
conflict with these results. On the explored scales we cannot exclude a very
slow crossover to diffusion with a tiny $D$, or a small power $\gamma$.

\paragraph{Spectrum.} Binned values of $1-\eta(\omega)$ scatter by about
$\pm0.3$, with decade averages mostly between $0$ and $0.2$; the drift
$\eta=1.004\to0.861$ seen in double-precision runs of different sizes is
within this scatter. The variance reconstructed from the measured spectrum
through Eq.~\eqref{eq:spectral-diffusion} reproduces the shape of the
ensemble variance, $10$--$15\%$ below it. The residual oscillations of $G_K$
are of arithmetic origin rather than noise; $A(1.3\times10^5)=\sum_{k\le K}C(k)=-0.494\pm0.001$.

\paragraph{Parity and the $\pi$ peak.} For $\alpha=1/e$, $\Var(A_n)/n^2$
decreases from $0.034$ to $0.0061$ between $n=2^{10}$ and $2^{22}$, and the
peak of $S_X$ at $\pi$ has height $\simeq680$. There are no peaks at $2\pi/5$
or at irrational frequencies. Per orbit, $A_n/n$ is nearly constant over
decades, with abrupt jumps. At rational frequencies other than $\pi$ the
phases of successive bursts are not locked, which is why the $\pi$ peak
dominates although the trap coefficients of the length-four words are
larger. The trap law~\eqref{eq:trap-law} for $\alpha=1/e$ agrees with the
exact coefficients within $5$--$20\%$; at $\beta=0$ itself the estimates
scatter more than at $\beta=10^{-7}$ because each orbit's momentum lattice is
frozen.

\paragraph{Persistence by momentum class (Experiment 10).} At $\beta=0$,
$p_n-\alpha S_n=p_0\pmod1$, so $p_0$ labels families of orbits that evolve
independently on $\mathbb T\times\mathbb Z$. Sampling $p_0$ on $1024$ bins
($128$ orbits each, $\alpha=\sqrt2-1$): the growth of $\Var(S_n)$ is shared by
all classes (at $n=10^6$ the $10\%$ of classes with the largest conditional
variance carry $15\%$ of the total), whereas long survival is carried by
thin, $q_0$-independent bands of $p_0$: at $n=10^6$, $77\%$ of the classes
have no survivor, against $28\%$ expected for a $p_0$-independent survival.
The bands are selected early, $P(\tau_+>256\mid p_0)$ ranging from $0.003$
outside to $0.2$--$0.3$ inside, and this profile is reproduced exactly by the
polygonal cylinders pruned when $S_n$ becomes positive. Some band edges
coincide with the resonances $\lVert2p_0+(2k-1)\alpha\rVert=0$ of orbits
bouncing between two adjacent levels. After the selection the dynamics is
common to all classes: conditional on survival to $4096$, the probability
of surviving to $2^{20}$ is $0.25$--$0.35$ in almost every active class,
corresponding to $\theta\simeq0.22$.

\paragraph{Persistence: null models.} For fractional Brownian motion the
persistence exponent is $1-H$~\cite{Molchan1999}; the local variance
exponents $2H\simeq0.1$--$0.2$ would give $\theta\simeq0.9$, far from the
measured $0.2$. The Sparre Andersen identity fed with the measured
marginals fails at $\beta=0$ in the exponent, not only in the amplitude. For
the generic case its extensions to correlated steps exist only under
renewal-type assumptions~\cite{ArtusoCristadoroDegliEspostiKnight2014}; we
use it only as a null model, which reproduces simulated Bernoulli walks
within $1\%$ up to $n=4096$.

\section{An exact Markov hierarchy}
\label{sm:markov}

Let $\Sigma=\{\thetafun(q_0)=+1,\ \thetafun(q_0-p_0)=-1\}$ be the set of points
where a run of $+1$ begins; $\mu(\Sigma)=1/4$ exactly and the mean return
time is $4$. By Lemma~\ref{lem:runs} the first-return map $R$ to $\Sigma$ is
piecewise affine with finitely many branches, one for each cycle
$(+1)^a(-1)^b$ ($67$ generic, $58$ for $\beta=0$, $\alpha=1/e$, $63$ for
$\beta=0$, $\alpha=\sqrt2-1$); on each branch
$R(q,p)=(q+np+\beta n(n+1)/2+\alpha W,\;p+n\beta+\alpha(a-b))$, $n=a+b$. The
increment of $S$ over a cycle is $\varphi=a-b$, and $D=\sigma^2(\varphi;R)/4$.
Cylinders of $R$ of length $J+1$ are obtained by intersecting affine images
of branches with the branch partition, and define the order-$J$ Markov chain
on cylinders of length $J$, in the spirit of the Markov approximations of
Ref.~\cite{HorvatEtAl2009} but without sampling. Its asymptotic variance is
$\sigma_J^2=2\langle\varphi,x\rangle_\pi-\langle\varphi,\varphi\rangle_\pi$,
where $x$ solves the Poisson equation $(I-P)x=\varphi$ (sparse direct solver),
and $D_J=\sigma_J^2/4$. For $J\le28$ ($\simeq2.7\times10^5$ states)
$D_J\simeq2.9$ in the generic regime, while for $\beta=0$
$D_J\simeq0.78/J$ ($14\le J\le28$) for both values of $\alpha$. Since the
chain retains a memory of about $4J$ steps, $4JD_J\simeq3.1$ should be
compared with $d\Var(S_n)/d\ln n\simeq3.2$--$3.7$ at $n\simeq10^2$. The
transfer function $x$ has root-mean-square size growing like $\sqrt J$ in the
generic regime but only logarithmically at $\beta=0$ ($x_{\rm rms}\simeq3$
at $J=28$), suggestive of a near-coboundary structure. A finite-order chain
does not determine the asymptotic behavior of the map.

\section{Worst-case Birkhoff averages}
\label{sm:worst}

Unique ergodicity is equivalent to $\sup_x|N^{-1}\sum_{n<N}f(T^nx)|\to0$ for
a dense set of continuous $f$ (here the Fourier modes, with the usual care
for the discontinuity of $T$), which typical-orbit statistics do not probe.
We estimate the supremum by the maximum over $32\,768$ Lebesgue initial
conditions, over $8192$ initial conditions within $10^{-14}$--$10^{-3}$ of
the cuts, and by a greedy adversarial search at $N=2^{10},2^{14},2^{16}$
($64$ parents $\times$ $63$ children, radii $10^{-2}$--$10^{-7}$), whose worst
points are then followed to $N=2^{20}$ (Experiment 15, exact lift).

\begin{table}[ht]
\centering
\caption{Largest $|N^{-1}\sum_{n<N}e^{2\pi iq_n}|$ found
($g=(\sqrt5-1)/2$).}
\label{tab:worst}
\resizebox{\linewidth}{!}{\begin{tabular}{lcccc}
\toprule
 & $N=2^{10}$ & $2^{14}$ & $2^{18}$ & $2^{20}$\\
\midrule
generic $(1/e,g)$, random max / median & 0.179 / 0.035 & 0.049 / 0.009 & 0.012 / 0.0023 & 0.0062 / 0.0011\\
generic $(1/e,g)$, near cuts & 0.205 & 0.046 & 0.011 & 0.0051\\
generic $(1/e,g)$, adversarial at $2^{10}$ & 0.213 & 0.019 & 0.0073 & 0.0026\\
generic $(\sqrt2-1,g)$, random max & 0.274 & 0.068 & 0.011 & 0.0050\\
generic $(\sqrt2-1,g)$, adversarial at $2^{10}$ & 0.293 & 0.042 & 0.0080 & 0.0035\\
skew shift $(0,g)$, random max & 0.060 & 0.015 & 0.0035 & 0.0016\\
$\beta=0$ $(\sqrt2-1)$, random max / median & 0.795 / 0.097 & 0.606 / 0.075 & 0.587 / 0.061 & 0.451 / 0.055\\
$\beta=0$, adversarial at $2^{14}$ & 0.796 & 0.796 & 0.796 & 0.796\\
counterexample $(1/e,(1-\alpha)/2)$, adversarial & 0.425 & 0.425 & 0.425 & 0.425\\
\bottomrule
\end{tabular}}
\end{table}

For both generic pairs the worst case decays as $N^{-1/2}$ (exponent $0.49$
over three decades), like the uniquely ergodic skew shift, with a ratio
max/median $\simeq5$--$6$ that is stable in $N$ (it would be $\simeq4$ for Gaussian
fluctuations): the worst case is the extreme of a fixed, somewhat heavy-tailed
distribution of $\sim3\times10^4$ diffusive fluctuations. Adversarially selected points are transients and
become typical by $N=2^{20}$; points next to the cuts behave like random
ones. The method finds the known obstructions: at $\beta=0$ it recovers both
period-two segments, $p=1-\alpha/2$ with average $\cos(\pi\alpha/2)=0.7957$ and
$p=\tfrac12-\tfrac\alpha2$ with $\cos(\pi(1-\alpha)/2)=0.6057$, constant to
$N=10^6$; for the counterexample pair it finds the period-four orbits, with
exact average $0.42487$. Typical $\beta=0$ orbits converge slowly (median
$\propto N^{-0.08}$), consistent with the $n^{-0.2}$ correlations. This is
numerical evidence for unique ergodicity under rational independence, not a
proof: a finite search cannot exclude very small exceptional sets.

\section{The asymmetric deformation: further data}
\label{sm:deformation}

\paragraph{Generic $\beta$.} The correlation at lag one vanishes for every
$\varepsilon$ (since $p_0$ is uniform given $q_0$), whereas at lag two it
vanishes only at $\varepsilon=1/2$. For $\varepsilon<0.04$,
$F=0.96$--$1.12$ (pair A) and $0.96$--$1.22$ (pair B), with statistical error
$\simeq0.015$; at $\varepsilon=0.1$--$0.2$ the tail of the $p$-modes already
deviates from the Poisson prediction, with a momentum relaxation time
$\sim1/\varepsilon$. The two ensembles of Fig.~\ref{fig:fano} differ by an
rms of $0.021$ in both pairs, against $0.022$ (A) and $0.021$ (B) expected, so
the structure is reproducible at every grid point; $F$ at $n/16$ is smaller
by $1.5\%$ on average ($1.8\%$ rms), and no extrapolation in $n$ is
attempted. Pair B has maxima of the smoothed $F$ near
$\varepsilon\simeq0.10,0.20,0.34,0.41$, $D$ up to $4.80$ at $\varepsilon=0.41$
and $D(1/2)=2.37$ (Experiment 12: $2.33$). Two zoomed windows,
$\varepsilon\in[0.28,0.30]$ ($400$ points) and $1/3\pm0.005$ ($401$ points),
each with two independent ensembles of $32\,768$ orbits at $n=2^{16}$, show a
smooth profile: the seeds agree within statistics (rms $z$-difference $1.00$
and $1.02$), polynomials of degree $8$--$12$ fit with $\chi^2/\text{dof}$
$1.04$--$1.17$, and $F(1/3)=2.50$ shows no feature at the rational. The
jaggedness of the wide grid is a sum of bumps of width $0.005$--$0.01$ plus
noise; no self-similar structure is seen down to a step $5\times10^{-5}$ at
noise $0.02$.

\paragraph{Tails at $\beta=0$, $\varepsilon=1/4$.} In two seeds of $10^5$
orbits to $2^{24}$, one orbit carries $3$--$18\%$ of $\sum X_n^2$, the kurtosis
is $10^2$--$4\times10^3$, the Hill index of $|X_n|$ is $2.0$--$2.3$ from
$n=2^{14}$ to $2^{24}$, and the raw $\Var(X_n)/n$ at $2^{24}$ differs between
seeds ($3.51$ and $5.17$). The interquartile scale
$\sigma_{\rm IQR}^2/n$, $\sigma_{\rm IQR}=(Q_{75}-Q_{25})/1.349$, equals
$1.34,1.89,2.26,2.66$ at $n=2^{10},2^{16},2^{20},2^{24}$ in both seeds, with
slope $0.133\pm0.001$ in $\ln n$; it is not the variance and is not compared
with $c(\varepsilon)$. The extreme orbits have uniform $(q_0,p_0)$, are
ordinary at earlier times (Spearman correlation of $|X|$ between $2^{16}$ and
$2^{24}$: $0.002$), and are all negative. The same run at $\varepsilon=1/2$
has light tails (kurtosis $4.7$--$5.3$, Hill index $5$--$9$) and
$\Var(S_{2^{24}})=94.8$, against $95.0$ from the fit $a+b\ln^2n$. In
Experiment 17 the stays of lengths $128$ to $2^{16}$ contribute $1.41$ to
$\mathbb E\,X^2/n$ at $\varepsilon=1/4$, against $1.44$ predicted, and the
law~\eqref{eq:trap-law} holds within $1$--$2\%$ up to $\ell\simeq2^{12}$.

\paragraph{The total trap coefficient (Experiment 19).}
Table~\ref{tab:ktot} gives $K_{\rm tot}(\varepsilon)$ for $\alpha=\sqrt2-1$
from all primitive zero-sum words of length $m\le60$. $K_{\rm tot}=16\varepsilon^2$
holds exactly for $\varepsilon\le0.2$ (only $m=2,4$); the first longer word,
$m=10$, appears at $\varepsilon\simeq0.2075$, then $m=8$ at $0.229$. $K_2$
saturates at $3\alpha^2=9-6\sqrt2$ for $0.3\le\varepsilon\le0.5$. The identity
$K_{\rm tot}=16\varepsilon^2$ holds for $\varepsilon\le0.05$ for five values
of $\alpha$ ($\sqrt2-1$, $1/e$, $\sqrt3-1$, $(\sqrt5-1)/2$, $\pi-3$), with an
$\alpha$-dependent range ($\pi-3$ deviates at $0.1$, the others at
$0.15$--$0.2$). For $\varepsilon\ge0.325$ the partial sums still grow at
$m=60$ and $c$ is only a lower bound; whether $K_{\rm tot}$ is finite there,
and how it behaves as $\varepsilon\to1/2$, is open.

\begin{table}[ht]
\centering
\caption{$K_{\rm tot}$ from zero-sum words of length $m\le30$ and $m\le60$,
and $c=(1-2\varepsilon)^2K_{\rm tot}(m\le60)$; $\alpha=\sqrt2-1$.}
\label{tab:ktot}
\begin{tabular}{cccccc}
\toprule
$\varepsilon$ & $K(m\le30)$ & $K(m\le60)$ & $16\varepsilon^2$ & $c$ & converged\\
\midrule
0.100 & 0.1600 & 0.1600 & 0.1600 & 0.1024 & yes\\
0.200 & 0.6400 & 0.6400 & 0.6400 & 0.2304 & yes\\
0.225 & 0.7855 & 0.7855 & 0.8100 & 0.2376 & yes\\
0.250 & 0.9760 & 0.9760 & 1.0000 & 0.2440 & yes\\
0.275 & 1.2579 & 1.2579 & 1.2100 & 0.2547 & yes\\
0.300 & 1.6889 & 1.6889 & 1.4400 & 0.2702 & yes\\
0.325 & 2.3836 & 2.4755 & 1.6900 & $\ge0.303$ & no\\
0.350 & 3.0907 & 3.5480 & 1.9600 & $\ge0.319$ & no\\
0.400 & 4.3435 & 5.5344 & 2.5600 & $\ge0.221$ & no\\
0.450 & 6.5154 & 11.4758 & 3.2400 & $\ge0.115$ & no\\
\bottomrule
\end{tabular}
\end{table}

\paragraph{Stay decomposition (Experiment 18).} Stays are maximal stretches
of period two ($s_n=s_{n-2}$, at least $66$ steps) or four (at least $132$
steps, without a period-two sub-stretch), $20\,000$ orbits to $2^{24}$ per
$\varepsilon$. Centered variances are essential, since outside the stays the
mean step is not zero. $\Var(X^{\rm rest}_n)/n$ goes $0.58\to0.61$
($\varepsilon=0.1$), $1.16\to1.28$ ($0.2$), $1.17\to1.33$ ($0.25$). At larger
$n$ the stay slope is lower than predicted ($0.099\pm0.051$ at
$\varepsilon=0.2$ on $[2^{18},2^{24}]$), as expected: in $2\times10^4$ orbits
fewer than one stay with $\ell\ge n/8$ is expected at $n=2^{24}$ ($\simeq0.02$
at $\varepsilon=0.2$), so the sample variance underestimates its mean; the
data support Eq.~\eqref{eq:clog} for $n\le2^{18}$ only, and $D_0$ is not
determined. With stays of all even periods up to $14$, at
$\varepsilon=0.25$, $0.3$, $0.35$, the slopes of $X^{\rm rest}$ are
$-0.001\pm0.002$, $0.002\pm0.002$ and $0.150\pm0.029$ (words longer than $14$
are missing at $0.35$), and the stay slope at $\varepsilon=0.3$ is
$0.15\pm0.05$ on $[2^{12},2^{18}]$, about $2\sigma$ below $c=0.270$. Hence
Eq.~\eqref{eq:clog} is not contradicted, but it is tested to about $40\%$
only for $\varepsilon>0.25$; at $\varepsilon=0.25$ the raw variance is too
noisy for a slope.

\paragraph{Zero drift and small $\beta$.} On the line
$\beta=-\alpha(1-2\varepsilon)$ at $\varepsilon=1/4$, the slope of
$\Var(X_n)/n$ in $\ln n$ goes from $-0.046$ ($2^8$--$2^{12}$) to
$-0.0023\pm0.0006$ ($2^{20}$--$2^{24}$); whether the limit is finite,
$\simeq0.23$, is not decided. At $\varepsilon=(\sqrt3-1)/3$ the same line
gives $1.47$ against $1.46$ at $\beta$ shifted by $10^{-3}$. For
$\beta=10^{-3}$ and $10^{-2}$ at $\varepsilon=1/4$ ($2048$ orbits,
$n\le2^{22}$) the variance is flat in $n$, the trap cutoff being
$\propto\beta^{-1/2}$.

\section{Phase-space exploration at finite resolution}
\label{sm:coverage}

To probe the exploration of the torus directly, we revisit the cell-coverage
test of Casati and Prosen~\cite{CasatiProsen2000}. Partition the torus into
$N=m^2$ equal square cells and let $n_N(t;z_0)$ be the number of distinct
cells visited by $z_0,\ldots,T^{t-1}z_0$; we estimate
$r_N(t)=\mathbb E_{z_0}[n_N(t;z_0)/N]$ from $4096$ independent uniform initial
conditions, counting coverage per orbit before averaging. For independent
uniform draws,
\begin{equation}
 r_N^{\mathrm{iid}}(t)=1-\left(1-\frac1N\right)^t
 \longrightarrow 1-e^{-s},\qquad s=\frac{t}{N},
 \label{eq:coverage-iid}
\end{equation}
and we use the exact finite-$N$ expression. We use the parameters of the
original study, which in our unit-torus convention read
$\alpha_{\rm cov}=(g-e^{-1})/4$, $\beta_{\rm cov}=(g+e^{-1})/4$, $g=(\sqrt5-1)/2$
(the change of coordinates $q=(x/2)\bmod1$, $p=(y/2)\bmod1$ from the torus
$[-1,1)^2$ halves both parameters and preserves the partition for even $m$),
and compare them with $(\alpha_{\rm cov},0)$. All coverage orbits are in
double precision; the $\beta=0$ data, to $t=1.6\times10^6$, are therefore
indicative only.

\paragraph{Generic coverage.} For $m=100,200,400,800$ and $t\le10N$ the
coverage nearly coincides with the independent reference
(Fig.~\ref{fig:coverage-generic}), with a small resolved deficit
$d_N(1)=r_N^{\rm iid}-r_N=(3.226,1.495,0.640,0.201)\times10^{-3}$ at $s=1$
(standard errors $(0.145,0.075,0.038,0.020)\times10^{-3}$; a repetition at
$m=400$ with another seed gives $0.677\pm0.039$). $\sqrt N\,d_N(1)$ decreases
from $0.323$ to $0.161$, so a simple $N^{-1/2}$ correction is not supported,
and no asymptotic exponent is assigned.

\paragraph{$\beta=0$.} Setting $\beta=0$ at the same $\alpha$ changes coverage
drastically (Fig.~\ref{fig:coverage-beta0}): at $t=10N$,
$r_N=0.11309\pm0.00054$, $0.06046\pm0.00028$, $0.03312\pm0.00017$ for
$m=100,200,400$, against $0.99995$ for independent draws; for $m=100$,
coverage rises only from $0.043$ to $0.081$ and $0.113$ for $t=10^3$, $10^4$,
$10^5$. Neither a logarithmic law nor a non-zero asymptotic unvisited
fraction is established. Ergodicity implies eventual visits to every cell
but no rate, and near-complete coverage does not imply ergodicity or
independence.

\begin{figure}[ht]
 \centering
 \includegraphics[width=\textwidth]{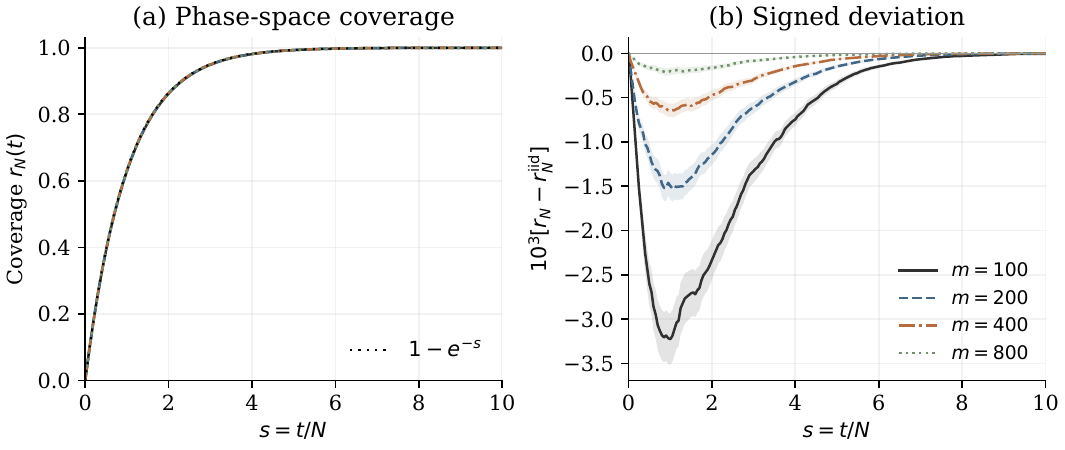}
 \caption{\textbf{Generic phase-space coverage}, $4096$ initial conditions per
 grid. (a) Mean covered fraction against $s=t/N$, with the exponential
 reference. (b) Signed deviation from the exact finite-$N$ independent
 reference, $\times10^3$; shading, two pointwise standard errors.}
 \label{fig:coverage-generic}
\end{figure}

\begin{figure}[ht]
 \centering
 \includegraphics[width=0.9\textwidth]{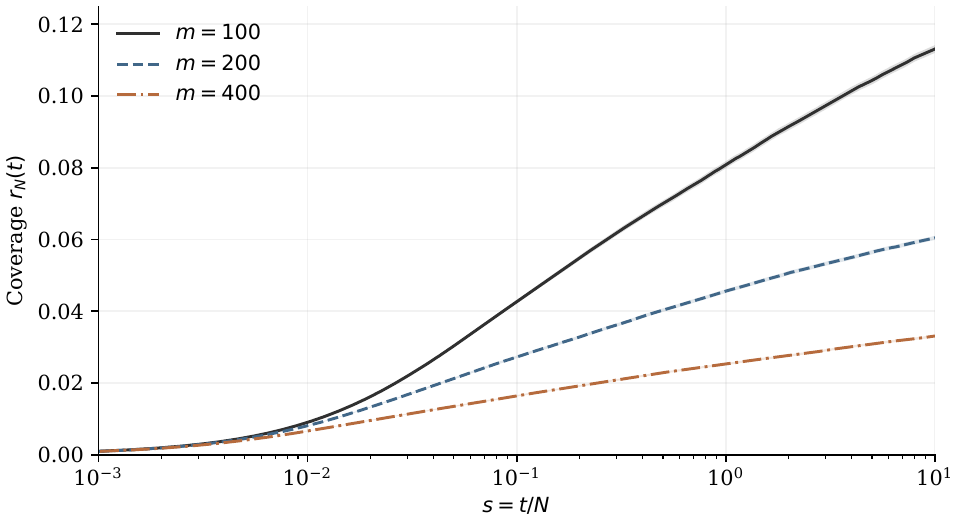}
 \caption{\textbf{Slow coverage for $\beta=0$} at $\alpha=\alpha_{\rm cov}$:
 coverage on a linear scale against logarithmic $s=t/N$ (shading: two
 pointwise standard errors); the independent reference, not shown, would be
 $\simeq99.995\%$ at $s=10$.}
 \label{fig:coverage-beta0}
\end{figure}

\fi

\end{document}